\pgfplotsset{compat=newest}
\newcommand\vdeg[2]{\deg_{#1}(#2)}
\DeclareMathOperator{\levelOp}{level}
\newcommand\level[1]{\levelOp(#1)}
\DeclareMathOperator{\coeffOp}{coeff}
\newcommand\coeff[2]{\coeffOp_{#1}(#2)}
\DeclareMathOperator{\discOp}{disc}
\newcommand\disc[2]{\discOp_{#1}(#2)}
\DeclareMathOperator{\resOp}{res}
\newcommand\res[3]{\resOp_{#1}(#2,#3)}
\DeclareMathOperator{\realRootsOp}{realRoots}
\newcommand\realRoots[1]{\realRootsOp(#1)}
\DeclareMathOperator{\irootOp}{root}
\newcommand\iroot[2]{\irootOp_{#1,#2}}
\newcommand\rationals[0]{\mathbb{Q}}
\newcommand\reals[0]{\mathbb{R}}
\newcommand\posints[0]{\mathbb{N}_{>0}}
\newcommand\sgninv[1]{\textit{sgn\_inv}(#1)}
\newcommand\del[1]{\textit{an\_del}(#1)}
\newcommand\representation[1]{\textit{repr}(#1)}
\newcommand\irordering[1]{\textit{ir\_ord}(#1)}
\newcommand\connected[1]{\textit{connected}(#1)}
\newcommand\sample[1]{\textit{sample}(#1)}
\newcommand\covering[1]{\textit{cov}(#1)}
\newcommand\proj[2]{\ensuremath{#1{\downarrow}_{#2}}}
\newcommand\Isymb[0]{\textnormal{\texttt{I}}}
\newcommand\cellsymb[0]{C}
\newcommand{\I}{\mathbb{I}}
\newcommand{\Q}{\mathbb{Q}}
\newcommand{\R}{\mathbb{R}}
\newcommand{\fals}{\texttt{False}\xspace}
\newcommand{\tru}{\texttt{True}\xspace}
\newcommand{\udef}{\texttt{Undef}\xspace}
\newcommand{\sat}{\texttt{SAT}\xspace}
\newcommand{\unsat}{\texttt{UNSAT}\xspace}
\newcommand{\Matrix}{\ensuremath{\overline{\varphi}}\xspace}
\newcommand{\Prefix}[1]{\ensuremath{Q_{#1} x_{#1} \cdots Q_n x_n}\xspace}
\newtheorem{theorem}{Theorem}[section]
\newtheorem{lemma}[theorem]{Lemma}
\theoremstyle{definition}
\newtheorem{definition}[theorem]{Definition}
\newtheorem*{example}{Example}
\numberwithin{equation}{section}
\begin{document}

\title[Cylindrical Algebraic Covering for Quantifiers]
 {Extensions of the Cylindrical Algebraic Covering Method for Quantifiers}
\author[Nalbach]{Jasper Nalbach}

\address{%
RWTH Aachen University\\
Aachen\\
Germany}

\email{nalbach@cs.rwth-aachen.de}

\thanks{Jasper Nalbach was supported by the Deutsche Forschungsgemeinschaft (DFG, German Research Foundation) as part of RTG 2236 \emph{UnRAVeL} and AB 461/9-1 \emph{SMT-ART}}
\author[Kremer]{Gereon Kremer}
\address{Certora Ltd\\
Germany}
\email{gereon.kremer@gmail.com}
\subjclass{Primary 14Q99; Secondary 68W99}

\keywords{Non-linear Arithmetic, Cylindrical Algebraic Covering, Quantifier Elimination}

\date{\today}

\begin{abstract}
	The \emph{cylindrical algebraic covering} method was originally proposed to decide the satisfiability of a set of \emph{non-linear real arithmetic} constraints.
	We reformulate and extend the cylindrical algebraic covering method to allow for checking the truth of arbitrary non-linear arithmetic formulas, adding support for both quantifiers and Boolean structure. Furthermore, we also propose a variant to perform \emph{quantifier elimination} on such formulas. After introducing the algorithm, we elaborate on various extensions, optimizations and heuristics. Finally, we present an experimental evaluation of our implementation and provide a comparison with state-of-the-art SMT solvers and quantifier elimination tools.
\end{abstract}

\maketitle

\section{Introduction}

\emph{Non-linear real arithmetic (NRA)} (or \emph{real algebra}) is the first-order theory whose atoms are polynomial constraints over real variables.
We consider three fundamental questions with regard to this theory: (1) \emph{Satisfiability} of quantifier-free formulas; that is, deciding whether an assignment to the formula's variables exists such that the formula evaluates to \tru. (2) \emph{Truth of sentences}; that is, deciding whether formulas where all variables are quantified are equivalent to \tru or \fals. Satisfiability is a special case of this question, as we can existentially quantify all free variables to obtain a sentence. (3) \emph{Quantifier elimination} in formulas containing both free variables (\emph{parameters}) and quantified variables; that is, computing an equivalent quantifier-free formula over the parameters. Deciding the truth of sentences is a special case of this question, as we can eliminate all quantifiers.

The field of \emph{satisfiability-modulo-theories (SMT)} solving deals with the first two problems; while checking the satisfiability of quantifier-free non-linear real arithmetic formula has fairly good support, checking the truth of sentences still lacks accessible and efficient tools. The tools for \emph{quantifier elimination} work a bit differently from SMT tools and might benefit from integrating ideas from SMT solving. 

Tarski~\cite{tarski1951} established the existence of  quantifier elimination methods for non-linear real arithmetic, although his method was practically unusable due its non-elementary complexity bounds. Today, the \emph{cylindrical algebraic decomposition (CAD)}~\cite{collins1975} method is the only complete procedure for answering all these questions that is used in practice, despite its doubly exponential worst-case complexity that severely limits the scalability of the method.
For the satisfiability problem of conjunctions of constraints, motivated by the application in SMT solving, the \emph{cylindrical algebraic covering (CAlC)} method~\cite{abraham2021} has been developed based on cylindrical algebraic decomposition. Although it retains the doubly exponential complexity, its performance is significantly better in practice~\cite{abraham2021,kremer2021calcimpl} while its implementation requires only a simple bookkeeping data structure. Furthermore, it more closely resembles human reasoning and is more amenable to proof production~\cite{abraham2021a,abraham2020}.

\paragraph{Contribution}

This paper extends~\cite{kremer2023calc} in which we propose a novel reformulation and extension of the cylindrical algebraic covering method that goes beyond the satisfiability problem of conjunctions to allow solving arbitrary quantified formulas as well as quantifier elimination queries. This work elaborates the details of the algorithm, proposes some optimizations and extensions, and provides an evaluation of its implementation.
We first consider checking truth where all variables are explicitly quantified, either existentially or universally, in \Cref{sec:quantified}, and then expand to the \emph{quantifier elimination} problem in \Cref{sec:qe}. The presented method does not rely on a SAT solver to solve a Boolean abstraction; we will elaborate in \Cref{sec:implicants} how the Boolean structure is incorporated. Afterwards, we present a divide-and-conquer adaption in \Cref{sec:splitting} and a fine-grained proof system for CAlC in \Cref{sec:proofs}. Finally, we elaborate on heuristics for directing the search in \Cref{sec:heuristics} and evaluate them in \Cref{sec:experiments}, comparing with state-of-the-art tools. We conclude in \Cref{sec:conclusion}.

\section{Related Work}
\label{sec:priorwork}

\subsection{Quantifier-free Formulas}

The \emph{NLSAT}~\cite{jovanovic2012} algorithm, later generalized to the \emph{model-constructing satisfiability calculus (MCSAT)}~\cite{demoura2013mcsat} and implemented in \texttt{z3} and \texttt{yices2}, as well as in our solver \texttt{SMT-RAT}, can be seen as an extension of CDCL(T)~\cite{ganzinger2004} for first-order theories; it is thus a combination of a search for a satisfying solution (exploration) and generalization of unsatisfiable solutions (deduction). In addition to the Boolean reasoning, we make decisions and propagations on theory variables as well, i.e., we assign values to theory variables and then evaluate and propagate literals based on these assignments. The assignment of theory variables is required to not violate any theory constraint that is assumed to be true on the Boolean level. If such an assignment is not possible for a theory variable $x_{n+1}$, then the unsatisfying intervals of the theory constraints (partially evaluated using the values for $x_1,\ldots,x_n$) cover the whole real line $\R$. This conflict is generalized using CAD-based techniques to a formula that excludes a connected set $S$ in $\R^n$ such that the cylinder $S \times \R$ is covered by unsatisfying sets of the same constraints.

The \emph{cylindrical algebraic covering (CAlC)} method~\cite{abraham2021} is implemented both in \texttt{cvc5} and in \texttt{SMT-RAT}.
It has has a number of similarities to the NLSAT algorithm, but is a theory solver for the classical CDCL(T) architecture and thus only considers conjunctions of constraints instead of arbitrary formulas.
The bookkeeping required is simpler than for MCSAT: if a conflict occurs, it is not generalized to a formula in $x_1,\ldots,x_n$, but uses an implicit description of the unsatisfying values, which consists of a set of CAD projection polynomials and the sample $s \in \R^n$ that caused the conflict; this implicit description is used to deduce unsatisfying intervals, similarly to how they are gathered from the input constraints.
While MCSAT allows for both breadth-first and depth-first search (or a combination of both), CAlC is restricted to a depth-first search due to its simpler bookkeeping.

\subsection{Quantified Formulas}

To decide sentences including quantifiers, the \texttt{z3} and \texttt{cvc5} solvers implement approaches based on \emph{quantifier instantiation}~\cite{demoura2007,niemetz2021,bjorner2015,ge2009} which are \emph{incomplete} for non-linear real arithmetic. We do not go into detail here, as these approaches work differently from our algorithm; for an overview, we refer to the related work section in~\cite{niemetz2021}.

A recent \emph{complete} algorithm is the \emph{QSMA} algorithm~\cite{bonacina2023} implemented in \texttt{yicesQS}. It maintains for every quantified subformula $Q y. \varphi(x_1,\ldots,x_n,y)$ (where $Q \in \{ \exists, \forall \}$) an over- and under-approximation (encoded as formulas in the free variables $x_1,\ldots,x_n,y$ of $\varphi$) which are refined iteratively in a breadth-first search manner. This refinement continues until these approximations are fine-grained enough to conclude satisfiability or unsatisfiability. The required approximations can be generated by NLSAT-based solvers.

For quantifier elimination, mentionable tools implementing CAD that are readily available include \texttt{QEPCAD~B}~\cite{brown1999,brown1999a,brown2003} and \texttt{Redlog}~\cite{seidl2003}. In particular, the former applies various techniques to reduce the number of projection polynomials in the CAD used to construct the solution formulas, which is often coarser than the result of naive CAD algorithms.
Furthermore, the commercial tools \texttt{Maple}~\cite{iwane2009} and \texttt{Mathematica}~\cite{strzebonski2000} implement the CAD for quantifier elimination. We refer to the respective sources for more details.

\section{Preliminaries}

Let $s \in \reals^i$, $s' \in \reals$, and $I \subseteq \reals$. We denote the Cartesian product of two sets using the $\times$ symbol. We denote by $s \times s'$ the point $(s_1,\ldots,s_i,s') \in \reals^{i+1}$, by $s_{[j]}$, $j<i$ the point $(s_1,\ldots,s_j) \in \reals^j$, and by $s \times I$ the Cartesian product $\{ s \} \times I \subseteq \reals^{i+1}$. Let $R \subseteq \reals^i$, then $\proj{R}{[j]}$, $j<i$ denotes the projection of $R$ onto the first $j$ coordinates. A set $R \subseteq \reals^i$ is called a \emph{cell} if it is connected.
Given a tuple such as $t = (a,b,c)$, we use $t.a$, $t.b$ and $t.c$ to access its entries.

Let $x_1, \dots, x_n$ be \emph{variables}, then $\rationals[x_1,\ldots,x_i]$ denotes the set of all \emph{polynomials} with rational coefficients in $x_1,\ldots,x_i$. We assume an ordering on the variables $x_1 \prec \cdots \prec x_n$. The highest variable (w.r.t. this ordering) occurring in a polynomial $p$ is called the \emph{main variable} and its \emph{level} $\level{p}$ is the index of the main variable.
The \emph{degree} of $p$ in $x_i$ is denoted by $\vdeg{x_i}{p}$. Let $s \in \R^j$, $j \leq i$, then $p(s,x_{j+1},\ldots,x_i)$ denotes the polynomial after substituting $s$ into $p$. The polynomial $p$ is called \emph{nullified} over $s$ if $j<i$ and $p(s,x_{j+1},\ldots,x_i) = 0$. Let $p \in \rationals[y]$ be a univariate polynomial, then $\realRoots{p}$ denotes the set of \emph{real roots} of $p$ in $y$.

A polynomial constraint $p \sim 0$ compares a polynomial $p$ with zero using a relation symbol $\sim \in \{=,\leq,\geq,\neq,<,>\}$; notations for polynomials are transferred to constraints where meaningful. We assume every formula $\varphi$ to be a first-order formula over non-linear real arithmetic with polynomial constraints defined in variables $x_1, \dots, x_n$. We use $\equiv$ do denote equivalence of formulas (modulo non-linear real arithmetic). A cell $R \subseteq \reals^i$ is called \emph{semi-algebraic} if it is the solution set of a quantifier-free non-linear real arithmetic formula. A polynomial $p$ is called \emph{sign-invariant} on $R$ if for all points in $R$ the evaluation of $p$ has the same sign. A formula $\varphi$ is called \emph{truth-invariant} on $R$ if for all points in $R$ the evaluation of $\varphi$ is equivalent to the same truth value.

Let $s \in \R^i$, then we denote the \emph{(partial) evaluation up to level $i$} of $\varphi$ over $s$ by $\varphi[s]$: constraints of level $i$ evaluate to $\tru$ or $\fals$ according to standard semantics, otherwise they evaluate to $\udef$ (i.e., under this partial evaluation $x_1\cdot x_2 > 1$ evaluates to $\udef$ at $x_1 = 0$). The semantics are extended for formulas inductively according to the three-state semantics of the logical operators; that means, we evaluate formulas also on sample points which do not assign all variables in the formula, e.g. ``$\udef \wedge \fals$'' evaluates to $\fals$.

A formula $\varphi$ is in \emph{prenex normal form} if it consists of a \emph{prefix} of quantifiers and a quantifier-free formula called the \emph{matrix} \Matrix:
\[
	\varphi := \Prefix{k+1}.\; \Matrix(x_1, \dots, x_n)
\]
If $k \neq 0$, $\varphi$ has free variables (also called \emph{parameters}) that are not explicitly quantified.

\subsection{Cylindrical Algebraic Decomposition}

We give a short introduction to CAD. For the understanding of some details of this paper, we assume the reader to be familiar with CAD (we refer for instance to~\cite{kremer2020phd}).
We first formally define what the output of a CAD computation is:

\begin{definition}[Cylindrical Algebraic Decomposition]
	\label{def:cad}

	Let $i \in \posints,\ i \leq n$.

	Let $R \subseteq \reals^i$ be a cell.
	$R$ is called \emph{locally cylindrical} if either $i=1$; or $i>1$, $\proj{R}{[i-1]}$ is locally cylindrical, and there exist continuous functions $\theta_1,\theta_2: \proj{R}{[i-1]} \rightarrow \reals$ such that either $R = \{ (r,\theta_1(r)) \mid r \in \proj{R}{[i-1]} \}$ ($R$ is called a \emph{sector} over $\proj{R}{[i-1]}$) or $R = \{ (r,r') \mid r \in \proj{R}{[i-1]}, \theta_1(r) < r' < \theta_2(r) \}$ ($R$ is called a \emph{section} over $\proj{R}{[i-1]}$).

	Let $D \subset \{ R \mid R \subseteq \reals^i \}$ be finite.
	$D$ is called a \emph{decomposition} if $\cup_{R \in D} R = \reals^i$ and $R \cap R' = \emptyset$ for all $R,R' \in D,\ R \neq R'$.
	$D$ is called a \emph{cylindrical decomposition} if $D$ is a decomposition, each $R \in D$ is locally cylindrical; and either $i=1$, or $i>1$ and $D' := \{ \proj{R}{[i-1]} \mid R \in D \}$ is a cylindrical decomposition.
	$D$ is called a \emph{cylindrical algebraic decomposition (CAD)} if it is a cylindrical decomposition and each $R \in D$ is semi-algebraic.
\end{definition}

A CAD $D$ is computed such that it is \emph{sign-invariant for $P \subset \rationals[x_1,\ldots,x_n]$}, i.e. every $p \in P$ is sign-invariant in every $R \in D$. A CAD is useful because of its cylindrical structure: Firstly, local cylindricity of each cell allows to iteratively compute a point that is contained in the cell. Secondly, the cylindrical arrangement of the cells (that is, given a cell $R' \in D'$, we know that there are $R_1,\ldots,R_k \in D$ such that the cylinder $R' \times \reals$ over $R'$ is equal to $R_1 \cup \ldots \cup R_k$) allows us to systematically explore the decomposition and reason about its structure.

A sign-invariant CAD is computed recursively: Given a set $P \subset \rationals[x_1,\ldots,x_i]$ of polynomials, we compute a projection $P' \subset \rationals[x_1,\ldots,x_{i-1}]$ such that, given a sign-invariant CAD $D'$ of $P'$, we can compute a CAD $D$ for $P$. To guarantee the latter, we define the projection such that the set $P$ is \emph{delineable} on each cell $R' \in D'$, that is, the roots $\{ (r',r) \mid r' \in R', r \in \realRoots{p(r',x_i)}, p \in P \}$ of $P$ on $R'$ can be described by continuous \emph{root functions} $\theta_1,\ldots,\theta_k: R' \to \reals$ such that $\theta_1(r') < \ldots < \theta_k(r')$ for all $r' \in R'$, i.e. these roots do not intersect. These functions then define the bounds of the cells in $D$ that decompose the cylinder $R' \times \reals$.

A \emph{projection operator} maps polynomials of level $i$ to polynomials of lower level. Ingredients are \emph{discriminants} $\disc{x_i}{p}$ which -- together with some \emph{coefficients} $\coeff{x_i}{p}$ -- guarantee delineability of a single polynomial $p \in \rationals[x_1,\ldots,x_i]$ (i.e. the root functions of $p$ are well-defined on $R'$ and do not intersect), and \emph{resultants} $\res{x_i}{p}{q}$ which guarantee that the root functions of two polynomials $p,q \in \rationals[x_1,\ldots,x_i]$ are either equal or do not intersect on $R'$.

The CAD algorithm has doubly exponential complexity~\cite{brown2007} due to the computation of iterated resultants (and discriminants) which results in a doubly exponential growth of degrees~\cite{england2016}. While this growth is inherent to the CAD, different projection operators with vastly different output sizes exist. If a projection operator computes more polynomials than necessary, the subsequent projection steps and computations oftentimes amplify this overhead drastically and thereby impact the efficiency in practice.

The original operator by Collins~\cite{collins1975} is complete but expensive. McCallum's projection operator~\cite{mccallum1998} is more efficient but incomplete (and was later improved by Brown~\cite{brown2001}), i.e., it fails if any polynomial in the projection is nullified over some point. It also does not compute a sign-invariant CAD, but an \emph{order-invariant} CAD, which is a property of polynomials which is stronger than sign-invariance (as the formal details are not required for this paper, we refer to~\cite{mccallum1998} for them). Lately, Lazard's projection operator~\cite{lazard1994}, which is complete and similarly efficient as McCallum, has been proven to be correct~\cite{mccallum2019}. Although this work is applicable to all mentioned projection operators, the projection operator defined in this paper is based on McCallum's projection due to its efficiency and simplicity; the presented algorithm detects nullifications and returns an unknown result.

\subsection{Cylindrical Algebraic Covering}

A CAD already allows to answer our questions about a non-linear real arithmetic formula $\varphi$: To do so, we compute a sign-invariant CAD $D$ for all polynomials occurring in $\varphi$; $D$ will be truth-invariant for $\varphi$. However, the CAD is finer than we actually need. We thus define the notion of \emph{cylindrical algebraic covering (CAlC)}, which is defined analogously to \Cref{def:cad}, but does not require that its cells are disjoint. We now compute a covering $C$ that is truth-invariant for $\varphi$, but not necessarily sign-invariant for its defining polynomials!

When computing a CAD, we fully decompose the space top-down, i.e., compute a full projection, and then construct the resulting cells. The CAlC is computed bottom-up: We compute sample points and generalize the reasons for the truth-value of $\varphi$ at these points. 

We briefly present the idea behind the CAlC method for checking the \emph{existential fragment} of non-linear real arithmetic (i.e. the input is in prenex normal form and all variables are existentially quantified) and refer to~\cite{abraham2021} for more details and to the preliminaries of~\cite{bar2023} for another intuitive introduction.

The fundamental idea is to recursively construct a (partial) sample point and collect intervals that represent unsatisfiable cells above this sample point.
When a sample point can not be extended because these intervals form a covering of the real line in the next dimension, the covering is projected into the previous dimension to refute the current sample point.
We then backtrack and choose a different value for the variable on the highest level.
Eventually, either a full sample point is constructed, and we return \sat; or an unsatisfiable covering is constructed in the first dimension, and we return \unsat.
In contrast to cells from cylindrical algebraic decomposition, intervals do not form a decomposition as they may overlap.

The algorithm starts by constructing unsatisfiable intervals for $x_1$ based on univariate constraints and then tries to select a value $s_1$ for the variable $x_1$ outside these intervals. If such a value exists, the method is called recursively with the partial sample point $(s_1)$.
After substituting $x_1 = s_1$, the constraints with main variable $x_2$ become univariate and thus suitable for identifying unsatisfiable intervals for $x_2$.
This process is continued recursively until either all constraints are satisfied (and we return \sat) or for some $x_i$ no suitable value exists. In the latter case, the set of unsatisfiable intervals covers the whole real line and forms a covering. This covering is generalized by projecting it to dimension $i-1$. The idea is to use projection tools borrowed from cylindrical algebraic decomposition with some improvements: as we only need to characterize this covering and not a decomposition, only a subset of the full projection is needed.
Using the current sample point, an interval for the variable $x_{i-1}$ with respect to the projection result can be computed which is added to the set of unsatisfiable intervals for $x_{i-1}$, possibly taking part in an unsatisfiable covering for $x_{i-1}$.
We now try another value for $x_{i-1}$, respecting the current set if unsatisfiable intervals. Unless we find a full satisfying sample point we eventually obtain an unsatisfiable covering for the first variable $x_1$ and return \unsat.

\paragraph{Implicit Cells.}

We generalize intervals (over a partial sample point) by attaching algebraic information in the form of sets of polynomials whose order-invariance characterizes satisfiability-invariant cells of a multivariate formula.

\begin{definition}[Implicit Cell]
	Let $i \in \posints$, $P \subseteq \rationals[x_1,\ldots,x_i]$ be a set of polynomials, $s \in \reals^i$, and $I \subseteq \reals$ be an interval. Let $R \subseteq \reals^i$ be the maximal connected subset containing $s$ where all polynomials in $P$ are order-invariant.
	An \emph{implicit cell of level $i$} is a tuple $(P, s, I)$ such that $I = \{ r \mid (s_1,\ldots,s_{i-1},r) \in R \}$, i.e., its $i$-th coordinate is bounded from below by is the greatest root of $P$ in $x_i$ below (or equal to) $s_i$ and bounded from above by the smallest root above (or equal to) $s_i$.
	$R$ is called the \emph{cell defined by $P$ and $s$}.
\end{definition}

\begin{example}
	Consider the polynomials $P = \{ x_2+1,x_1^2+x_2^2-2,x_1-1\}$ and the sample point $s=(0,0)$. Then the corresponding implicit cell is $(P,s,(-1,\sqrt{2}))$, as $x_2+1$ has a zero at $(0,-1)$, $x_1^2+x_2^2-2$ at $(0,\sqrt{2})$, and no polynomial has a zero over $(s_1)=(0)$ between these two points.
\end{example}

\paragraph{Implicants.}

For reasoning about the Boolean structure of a formula, we introduce the notion of implicants.
An \emph{implicant} $\psi$ of a quantifier-free formula $\varphi$ is usually understood to be a ``simpler'' quantifier-free formula that implies $\varphi$ ($\psi \Rightarrow \varphi$), and the set of constraints of $\psi$ is a subset of the ones in $\varphi$.
We adapt this concept as follows.

\begin{definition}[Implicant]
	\label{def:implicant}
	Let $s \in \R^i$ be a (partial) sample point and $\varphi$ be a quantifier-free formula in $n$ variables. If $\varphi[s] \equiv \tru$, then $\psi$ is an \emph{implicant of $\varphi$ with respect to $s$} if
	\[ \psi[s] = \tru \land \left( \psi \Rightarrow \varphi \right) \]
	and the constraints of $\psi$ are of level at most $i$ and contained in $\varphi$.
	If $\varphi[s] \equiv \fals$, then $\psi$ is an \emph{implicant of $\varphi$ with respect to $s$} if
	\[ \psi[s] = \tru \land \left( \psi \Rightarrow \neg\varphi \right) \]
	and the constraints of $\psi$ are of level at most $i$ and contained in $\varphi$.
	We call $\psi$ a \emph{prime implicant} of $\varphi$ if the set of constraints in $\psi$ is minimal among all implicants of $\varphi$.
\end{definition}

Note that in the above definition, we allow $\varphi[s]$ to be a tautology or contradiction, while we require that the implicant $\psi[s]$ evaluates to a truth value after plugging in $s$.

\begin{example}
	Let $\varphi = x^2>0 \wedge (x<2 \vee x>4)$. Note that $\varphi(1) = \tru$, $\varphi(3) = \fals$ and $\varphi(0) = \fals$. $x^2>0 \wedge x < 2$ is a prime implicant of $\varphi$ w.r.t. $1$. $\neg (x<2 \vee x>4)$ is a prime implicant of $\varphi$ w.r.t. $3$. Both $\neg(x^2>0)$ and $\neg(x^2>0 \wedge x>4)$ are implicants of $\varphi$ w.r.t. $0$, but only the first is a prime implicant.

	Let $\varphi = (x<0 \vee y\leq4)\wedge (x>2 \vee y>4)$. Note that $\varphi(1,y) \equiv \fals$. $\neg(x<0) \wedge \neg(x>2)$ is a prime implicant of $\varphi$ w.r.t. $1$.
\end{example}

\section{Quantified Problems}
\label{sec:quantified}

We first describe how the cylindrical algebraic covering method can be adapted for problems where all variables are quantified (again, assuming the input formula is in prenex normal form).
Our presentation follows the structure of~\cite{abraham2021}, but is different in some details.

\begin{algorithm}[ht!]
	\Data{Global prefix \Prefix{1} and matrix \Matrix.}
	\Output{Either \sat or \unsat}

	$(f,O) := $ \FRecurse{$()$} \tcp*{\Cref{alg:main}} 
	\Return{$f$}
	\caption{\texttt{user\_call()} \label{alg:user}}
\end{algorithm}

\begin{algorithm}[ht!]
	\Data{Global prefix \Prefix{1} and matrix \Matrix.}
	\Input{Sample point $s=(s_1,\ldots,s_{i-1}) \in \R^{i-1}$.}
	\Output{$(\sat, \cellsymb)$ or $(\unsat, \cellsymb)$ where $s \times \cellsymb.I$ can or can not be extended to a model for any $s_i \in \cellsymb.I$. In both cases, $\cellsymb$ describes how $s$ can be generalized.
	}

    \lIf{$Q_i = \exists$}{
        \Return{\FExists{$s$}} \tcp*[f]{\Cref{alg:main_exists}}
    }
	\lElse{
        \Return{\FForall{$s$}} \tcp*[f]{\Cref{alg:main_forall}}
    }

	\caption{\texttt{recurse($s$)} \label{alg:main}}
\end{algorithm}

\subsection{Existential Quantification}

\begin{algorithm}[ht!]
	\Data{Global prefix \Prefix{1} and matrix \Matrix.}
	\Input{Sample point $s=(s_1,\ldots,s_{i-1}) \in \R^{i-1}$.}
	\Output{see \Cref{alg:main}}

	$\I_\text{unsat} := \emptyset$ \; \label{alg:main_exists:init}
	\While{$\bigcup_{\cellsymb \in \I_\text{unsat}} \cellsymb.I \neq \R$}{\label{alg:main_exists:while}
		$s_i :=$ \FSampleOutside{$\I_\text{unsat}$}  \;\label{alg:main_exists:sample}
		\uIf{$\Matrix[s \times s_i] = \fals$}{
			$(f,O) := (\unsat,\FEnclosingInterval{$s \times s_i$})$ \tcp*{\Cref{alg:get-enclosing-interval}}\label{alg:main_exists:unsat}
		}
		\uElseIf{$\Matrix[s \times s_i] = \tru$}{
			$(f,O) := (\sat,\FEnclosingInterval{$s \times s_i$})$ \tcp*{\Cref{alg:get-enclosing-interval}}\label{alg:main_exists:sat}
		}
		\Else(it holds $i<n$){
			$(f,O) :=$ \FRecurse{$s \times s_i$} \tcp*{\Cref{alg:main}, recursive call}
		}
		\uIf{$f = \sat$ }{
			$\cellsymb := $ \FCharacterizeInterval{$s$, $O$} \tcp*{\Cref{alg:characterize-interval}}\label{alg:main_exists:characterize}
			\Return{$(\sat, \cellsymb)$}
		} 
		\ElseIf{$f = \unsat$ }
		{
			$\I_\text{unsat} := \I_\text{unsat} \cup \{O\}$ \;\label{alg:main_exists:collect}
		}
	}

    $\cellsymb :=$ \FCharacterizeCovering{$s$, $\I_\text{unsat}$} \tcp*{\Cref{alg:characterize-covering}}\label{alg:main_exists:characterize2}

	\Return{$(\unsat, \cellsymb)$}
	
	\caption{\texttt{exists($s$)} \label{alg:main_exists}}
\end{algorithm}

First assume that all variables are existentially quantified, thus our algorithm resembles the original from~\cite{abraham2021}: \Cref{alg:main_exists} (which corresponds to \texttt{get\_unsat\_cover} from~\cite{abraham2021}) is recursively called to choose a suitable value $s_i$ for the next variable $x_i$ such that the resulting sample point does not conflict with the formula $\Matrix$. The algorithm maintains a list of cells $\I_\text{unsat}$ in $\reals^i$ that are known to violate the formula. The call to \texttt{sample\_outside} in \Cref{alg:main_forall:sample} chooses a value outside of $\I_\text{unsat}$.

If $s \times s_i$ immediately evaluates $\Matrix$ to $\fals$, we use \Cref{alg:get-enclosing-interval} that generalizes this unsatisfying sample to an unsatisfying cell in $\reals^i$ (see below for details); in~\cite{abraham2021}, such cells would be computed by \texttt{get\_unsat\_intervals}. If $s \times s_i$ immediately evaluates $\Matrix$ to $\tru$, the method generalizes the satisfying sample to a satisfying cell in $\reals^i$; in~\cite{abraham2021} we would return the satisfying sample only. If the formula does not evaluate to a truth value, we pick a value for the next variable $x_{i+1}$ by a recursive call. If the recursive call returns a satisfying cell $R \subseteq \reals^i$, we compute its (CAD-style) projection $R' \subseteq \reals^{i-1}$ suitable for the caller (which searches for a value for $x_{i-1}$) using \Cref{alg:characterize-interval}. If the recursive call returns an unsatisfying cell, we add it to the list $\I_\text{unsat}$. The differences to the original algorithm are due to the support for quantifier alternations, for which we need to generalize both satisfying and unsatisfying sample points.

If the list of unsatisfying cells covers the whole real line $s \times \reals$ above the given sample point $s \in \reals^{i-1}$, we compute the projection $R' \subseteq \reals^{i-1}$ (being an unsatisfiable cell on the lower levels) of a cylinder $R' \times \reals$ covered by unsatisfiable cells in $\reals^i$ represented by $\I_\text{unsat}$ using \Cref{alg:characterize-covering}.

We note that if $i=1$, \Cref{alg:characterize-interval,alg:characterize-covering} would need to return a cell on the ``zero-th level''. To simplify the presentation, we assume that a special placeholder value is returned instead of an actual interval.

We emphasize that compared to \texttt{get\_unsat\_cover} from~\cite{abraham2021}, there are no significant algorithmical differences; we mere changed some details such that the algorithm can be extended more easily for quantifier alternation.

\subsection{Universal Quantification}

\begin{algorithm}[t]
	\Data{Global prefix \Prefix{1} and matrix \Matrix.}
	\Input{Sample point $s=(s_1,\ldots,s_{i-1}) \in \R^{i-1}$.}
	\Output{see \Cref{alg:main}}

	$\I_\text{sat} := \emptyset$ \;\label{alg:main_forall:init}

	\While{$\bigcup_{\cellsymb \in \I_\text{sat}} \cellsymb.I \neq \R$}{\label{alg:main_forall:while}
		$s_i :=$ \FSampleOutside{$\I_\text{sat}$}  \; \label{alg:main_forall:sample}
		\uIf{$\Matrix[s \times s_i] = \fals$}{
			$(f,O) := (\unsat,\FEnclosingInterval{$s \times s_i$})$ \tcp*{\Cref{alg:get-enclosing-interval}}
		}
		\uElseIf{$\Matrix[s \times s_i] = \tru$}{
			$(f,O) := (\sat,\FEnclosingInterval{$s \times s_i$})$ \tcp*{\Cref{alg:get-enclosing-interval}}
		}
		\Else(it holds $i<n$){
			$(f,O) :=$ \FRecurse{$s \times s_i$} \tcp*{\Cref{alg:main}, recursive call}\label{alg:main_forall:call}
		}
		\uIf{$f = \sat$}{
            $\I_\text{sat} := \I_\text{sat} \cup \{O\}$ \;\label{alg:main_forall:collect}
		} 
		\ElseIf{$f = \unsat$}
		{
            $\cellsymb :=$ \FCharacterizeInterval{$s$, $O$} \tcp*{\Cref{alg:characterize-interval}}
            \Return{$(\unsat, \cellsymb)$} \;\label{alg:main_forall:returnunsat}
		}
	}
	$\cellsymb := $ \FCharacterizeCovering{$s$, $\I_\text{sat}$} \tcp*{\Cref{alg:characterize-covering}}
    \Return{$(\sat, \cellsymb)$}

	\caption{\texttt{forall($s$)} \label{alg:main_forall}}
\end{algorithm}

Now assume the formula contains quantifier alternations. \Cref{alg:user} is the interface to the recursive \Cref{alg:main}, calling it with an empty sample point and extracting the main return value. \Cref{alg:main} checks the current quantifier and calls out to \Cref{alg:main_exists} or \Cref{alg:main_forall} accordingly.

\Cref{alg:main_forall} is mostly identical to \Cref{alg:main_exists}: While \Cref{alg:main_exists} collects unsatisfiable cells and returns early when it finds a satisfiable cell, \Cref{alg:main_forall} collects satisfiable cells and returns early when it finds an unsatisfiable cell.
Note that we project cells and coverings of cylinders (i.e. calling \Cref{alg:characterize-interval} and \Cref{alg:characterize-covering}) for both satisfiable and unsatisfiable coverings in the very same way. That is, the projection operations work with implicit cells; whether the input formula is \tru or \fals on an implicit cell is irrelevant for the projection operations.

\subsection{Truth-Invariant Cells}

\begin{algorithm}[t]
	\Data{Global matrix \Matrix.}
	\Input{Sample point $s \in \R^{i}$ and polynomials $P \subseteq \rationals[x_1,\ldots,x_i]$.}
	\Output{A maximal interval $I \subseteq \reals$ such that $P$ is sign-invariant in $s \times I$.}  

	\lIf{$\exists p \in P.\; (\level{p}=i \wedge p(s_{[i-1]},x_i) = 0)$}{
		\textbf{fail}\tcp*[f]{We fail on nullification}
	}
	$Z := \{ -\infty, \infty \} \cup \bigcup_{p \in P, \level{p}=i} \realRoots{p(s_{[i-1]},x_i)}$ \;
	\lIf{$s_i \in Z$}{ \Return{$[s_i,s_i]$} }
	$l := \max \{ z \in Z \mid z \leq s_i \}$ \;
	$u := \min \{ z \in Z \mid z \geq s_i \}$ \;
	\Return{$(l,u)$}

	\caption{\texttt{compute\_cell($s$, $P$)}}\label{alg:compute-interval}
\end{algorithm}

\begin{algorithm}[t]
	\Data{Global matrix \Matrix.}
	\Input{Sample point $s \in \R^{i}$ such that $\Matrix[s] \equiv \fals$ or $\Matrix[s] \equiv \tru$.}
	\Output{A satisfiability-invariant implicit cell $\cellsymb$ containing $s$.}  
	$P := \FImplicant{$\Matrix$, $s$}$ \;
	\textbf{replace} $P$ by its irreducible factors \;
	$\cellsymb := (P,s,\FComputeInterval{$s$, $P$})$ \tcp*{\Cref{alg:compute-interval}}
	\Return{\cellsymb} 

	\caption{\texttt{get\_enclosing\_cell($s$)}}\label{alg:get-enclosing-interval}
\end{algorithm}

\Cref{alg:get-enclosing-interval} computes an implicit cell around the given sample point that is satisfiability-invariant with respect to \Matrix. It first obtains the set of polynomials from an implicant of \Matrix w.r.t. the current sample $s$ by calling \texttt{implicant\_polynomials}; the sign-invariance of these polynomials directly implies the truth-invariance of \Matrix.
The algorithm then uses \Cref{alg:compute-interval} (which corresponds to \texttt{interval\_from\_characterization} in~\cite{abraham2021}) to construct the interval above $s$ that is contained in a sign-invariant (or truth-invariant) cell.
The helper function \texttt{implicant\_polynomials} is expected to return the polynomials of a (possibly prime) implicant of $\Matrix$ with respect to $s$.
This might include polynomials not only with main variable $x_i$, but also lower-level polynomial, effectively bounding also lower-level coordinates of the sign-invariant cell.

If $\Matrix[s] = \fals$ and $\Matrix$ is a simple conjunction, it is easy to obtain a \emph{prime implicant} as the negation of a single conflicting constraint in \Matrix; calling it in a loop as done in \cref{alg:main_exists} thus emulates $\texttt{get\_unsat\_intervals}$ from~\cite{abraham2021}. If $\Matrix[s] = \tru$ and $\Matrix$ is a simple conjunction and non-redundant (i.e. no sub-formula of $\Matrix$ implies $\Matrix$), then $\Matrix$ itself is the only prime implicant.

\Cref{alg:characterize-interval,alg:characterize-covering} implement a reduced CAD projection based on McCallum's projection operator (see \Cref{sec:projection} for details); these algorithms define the same projection as~\cite[Algorithm~4]{abraham2021}, but split the projection into the characterization of individual cells and the characterization of a covering of a cylinder. \Cref{alg:characterize-interval} computes a CAD projection $R \subseteq \reals^i$ of a single cell in $\reals^{i+1}$ and uses \Cref{alg:compute-interval} to construct an interval for $x_i$. \Cref{alg:characterize-covering} first calls the auxiliary method $\texttt{compute\_cover}$ which takes a set of cells $\I$ as input, and returns a sequence of a subset of these cells: Firstly, it iteratively eliminates ``redundant'' cells (a cell $\cellsymb$ is redundant in $\I$ if $\cellsymb.I \subseteq \cellsymb'.I$ for some other interval $\cellsymb' \in \I$); secondly, it sorts the cells by their interval's lower bound. This way, it is guaranteed that neighboring cells in the resulting sequence overlap (or their union is connected); this is important for the correctness of the projection in \Cref{proj-overlap2}. For more details, we refer to~\cite[Section 4.4.1]{abraham2021}.

\begin{algorithm}[t]
	\Input{Sample point $s \in \R^i$ and an implicit cell $\cellsymb=(\cdot,s \times \cdot,\cdot)$ of level $i+1$.}
	\Output{A satisfiability-invariant implicit cell $\cellsymb'$ containing $s$.}
	$P_{i+1} := \{ p \in P \mid \level{p} = i+1 \}$, $P_\bot := P \setminus P_{i+1}$ \;
	$P' := P_\bot \cup  \{ \disc{x_{i+1}}{p} \mid p \in P_{i{+}1} \} \cup \bigcup_{p \in P_{i{+}1}} \coeff{x_{i+1}}{p} $ \label{alg:characterize-interval:coeffs}\;
	$P' := P' \cup \{ \res{x_{i+1}}{p}{q} \mid p, q \in P_{i{+}1}, p(s \times \cellsymb.I.l) = 0 , \exists s' \leq \cellsymb.I.l.\; q(s \times s') = 0  \}$ \nllabel{proj-overtakel}\;
	$P' := P' \cup \{ \res{x_{i+1}}{p}{q} \mid p, q \in P_{i{+}1}, p(s \times \cellsymb.I.u) = 0 , \exists s' \geq \cellsymb.I.u.\; q(s \times s') = 0  \}$ \nllabel{proj-overtakeu}\;
	$P' := P' \cup \{ \res{x_{i+1}}{p}{q} \mid p, q \in P_{i{+}1}, p(s \times \cellsymb.I.l) = 0, q(s \times \cellsymb.I.u) = 0  \}$ \nllabel{proj-overtake}\;
	\textbf{replace} $P'$ by its irreducible factors \;
	\Return{$(P',s,\FComputeInterval{$s$, $P'$})$} \tcp*{\Cref{alg:compute-interval}}
	
	\caption{\texttt{characterize\_cell($s$, $\cellsymb$)} \label{alg:characterize-interval}}
\end{algorithm}

\begin{algorithm}[t]
	\Input{Sample point $s \in \R^i$ and a set $\I$ of implicit cells of level $i+1$ such that $\bigcup_{\cellsymb \in \I} \cellsymb.I = \R$ and for all $\cellsymb \in \I$ it holds $\cellsymb=(\cdot,s \times \cdot,\cdot)$.}
	\Output{A satisfiability-invariant implicit cell $\cellsymb'$ of level $i$ containing $s$.}
	$(\cellsymb_1, \ldots, \cellsymb_k) :=$ \FComputeCover{$\I$} \tcp*{\cite[Section 4.4.1]{abraham2021}}
	$P' := \bigcup_{j \in \{1,\ldots,k\}} \FCharacterizeInterval{$s$, $\cellsymb_j$}.P$ \tcp*{\Cref{alg:characterize-interval}}
	\For{$j \in \{1, \dots, k-1\}$\nllabel{proj-overlap1}}{
		$P' := P' \cup \{ \res{x_{i+1}}{p}{q} \mid p \in \cellsymb_j.P, p(s \times \cellsymb_j.I.u) = 0, q \in \cellsymb_{j{+}1}.P, q(s \times \cellsymb_{j{+}1}.I.l) = 0 \}$ \nllabel{proj-overlap2}\;
	}
	\textbf{replace} $P'$ by its irreducible factors \;
	\Return{$(P',s,\FComputeInterval{$s$, $P'$})$} \tcp*{\Cref{alg:compute-interval}}
	
	\caption{\texttt{characterize\_covering($s$, $\I$)} \label{alg:characterize-covering}}
\end{algorithm}

\subsection{Details of the projection operator}
\label{sec:projection}

We changed how we normalize the polynomial sets after projection: While~\cite{abraham2021} assumes ``standard CAD simplifications'' of the polynomial sets, we explicitly use the set of their irreducible factors in \cref{alg:get-enclosing-interval}, \cref{alg:characterize-interval}, and \cref{alg:characterize-covering} to satisfy the requirements of the projection operator. Merely using an irreducible square-free basis, the common standard formulation for CAD projection, is not quite sufficient for cylindrical algebraic covering: we eventually compute resultants of polynomials that come from different local projection sets, i.e. from different bases.
If carefully executed, these sets can be made ``pairwise square-free'', as mentioned in~\cite[Section 2.1]{kremer2022}.
Fully factoring all polynomials is more robust and probably even more efficient in practice, if the implementation at hand has this capability.

\Cref{alg:characterize-interval:coeffs} of \Cref{alg:characterize-interval} adds all coefficients of all polynomials to the projection.~\cite[Algorithm 6]{abraham2021} proposes an optimization, which adds fewer coefficients and could also be applied here.

We further note that the presented projection is based on McCallum's projection operator~\cite{mccallum1998}. An adaption of the projection operator to Lazard's projection~\cite{lazard1994,mccallum2019} is possible, as discussed in~\cite[Section 4.4.6]{abraham2021}; this requires an adaption of the added coefficients and root isolation, and changes the correctness arguments (e.g. the implicit cells do not maintain order-invariance, but valuation-invariance).

\subsection{Example}

As significant portions of the algorithm are taken from the cylindrical algebraic covering method, we again refer to~\cite{abraham2021} for more intuition of unsatisfiable coverings.
In this example, we illustrate how both satisfiable and unsatisfiable cells are characterized for an existentially quantified variable and how coverings of satisfying cells are computed for a universally quantified variable. We consider the following formula with constraints $c_1$, $c_2$ and $c_3$ that are depicted in \Cref{fig:example:constraints}:
\[
	\forall x_1.\; \exists x_2.\; c_1: x_2>3.5 -2(x_1-4)^2 \wedge c_2: (x_1-2)^2+(x_2-2)^2-1>0 \wedge c_3: x_2<3 + 0.25(x_1-4)^2
\]

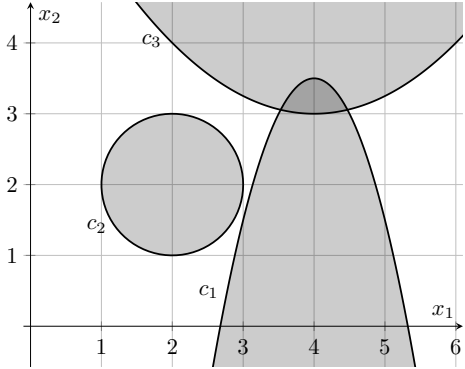
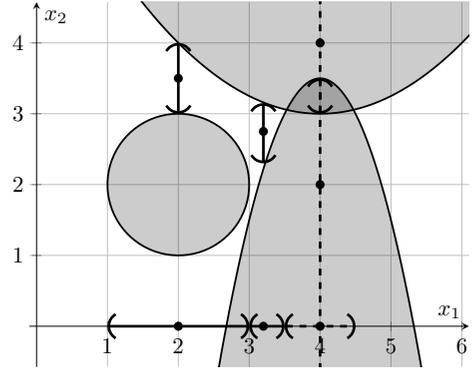
\begin{figure}[t]
	\centering
	\begin{subfigure}{0.45\textwidth}
		\centering
		\begin{tikzpicture}[scale=0.85]
			\begin{axis}[
				axis equal,
				xmin = -0.1,
				xmax = 6.1,
				ymin = -0.1,
				ymax = 4.1,
				xlabel = {$x_1$},
				ylabel = {$x_2$},
				grid = major,
				axis x line = middle,
				axis y line = middle,
			]
				\addplot[name path=c1, thick, no marks, samples=100, fill=black, fill opacity=0.2, domain=2:6] { (3.5 + -2*(x-4)^2) } node[left,pos=0.3,black,opacity=1] {$c_1$};
	
				\addplot[name path=c2,samples=80,domain=0:360,no marks, thick, black, fill=black, fill opacity=0.2] ({cos(x)+2},{sin(x)+2}) node[left,pos=0.6,black,opacity=1] {$c_2$};

				\addplot[name path=c3, thick, no marks, samples=200, fill=black, fill opacity=0.2, domain=1:7] { (3 + 0.25*(x-4)^2) } node[left,pos=0.2,black,opacity=1] {$c_3$};
	
				\path [name path=above] (\pgfkeysvalueof{/pgfplots/xmin},\pgfkeysvalueof{/pgfplots/ymax}) -- (\pgfkeysvalueof{/pgfplots/xmax},\pgfkeysvalueof{/pgfplots/ymax});
				\path [name path=below] (\pgfkeysvalueof{/pgfplots/xmin},\pgfkeysvalueof{/pgfplots/ymin}) -- (\pgfkeysvalueof{/pgfplots/xmax},\pgfkeysvalueof{/pgfplots/ymin});
				
			\end{axis}
		\end{tikzpicture}
		\caption{Graphs of the constraints. The gray areas depict the conflicting cells of the constraints.\\}
		\label{fig:example:constraints}
	\end{subfigure}
	\hfill
	\begin{subfigure}{0.45\textwidth}
		\centering
		\begin{tikzpicture}[scale=0.85]
			\begin{axis}[
				axis equal,
				xmin = -0.1,
				xmax = 6.1,
				ymin = -0.1,
				ymax = 4.1,
				xlabel = {$x_1$},
				ylabel = {$x_2$},
				grid = major,
				axis x line = middle,
				axis y line = middle,
			]
				\addplot[name path=c1, thick, no marks, samples=100, fill=black, fill opacity=0.2, domain=2:6] { (3.5 + -2*(x-4)^2) };
				\addplot[name path=c2,samples=80,domain=0:360,no marks, thick, black, fill=black, fill opacity=0.2] ({cos(x)+2},{sin(x)+2});
				\addplot[name path=c3, thick, no marks, samples=200, fill=black, fill opacity=0.2, domain=1:7] { (3 + 0.25*(x-4)^2) };
	
				\path [name path=above] (\pgfkeysvalueof{/pgfplots/xmin},\pgfkeysvalueof{/pgfplots/ymax}) -- (\pgfkeysvalueof{/pgfplots/xmax},\pgfkeysvalueof{/pgfplots/ymax});
				\path [name path=below] (\pgfkeysvalueof{/pgfplots/xmin},\pgfkeysvalueof{/pgfplots/ymin}) -- (\pgfkeysvalueof{/pgfplots/xmax},\pgfkeysvalueof{/pgfplots/ymin});

				\draw[very thick,{Parenthesis[scale=1.5]}-{Parenthesis[scale=1.5]}] (2,3) -- (2,4);
				\draw[very thick,{Parenthesis[scale=1.5]}-{Parenthesis[scale=1.5]}] (3.2,2.3) -- (3.2,3.15);
				\draw[very thick,-{Parenthesis[scale=1.5]},dashed] (4,-1) -- (4,3.5);
				\draw[very thick,{Parenthesis[scale=1.5]}-,dashed] (4,3) -- (4,5);
						
				\draw[very thick,{Parenthesis[scale=1.5]}-{Parenthesis[scale=1.5]}] (1,0) -- (3,0);
				\draw[very thick,{Parenthesis[scale=1.5]}-{Parenthesis[scale=1.5]}] (3,0) -- (3.5,0);
				\draw[very thick,{Parenthesis[scale=1.5]}-{Parenthesis[scale=1.5]},dashed] (3.5,0) -- (4.5,0);

				\fill (2,0)  circle[radius=2pt];
				\fill (3.2,0)  circle[radius=2pt];
				\fill (4,0)  circle[radius=2pt];

				\fill (2,3.5)  circle[radius=2pt];
				\fill (3.2,2.75)  circle[radius=2pt];
				\fill (4,2)  circle[radius=2pt];
				\fill (4,4)  circle[radius=2pt];

			\end{axis}
		\end{tikzpicture}
		\caption{The satisfiable intervals are indicated with a solid line, the unsatisfiable intervals with a dashed line.}
		\label{fig:example:intervals}
	\end{subfigure}
	\caption{Illustration of the example.}
	\label{fig:example}
\end{figure}

\noindent We start with the first variable being universally quantified:
\begin{description}%
	\item[\texttt{\hyperref[alg:main_forall]{forall}($s = ()$)}] We start covering the real line with satisfiable intervals by sampling values for $x_1$ (\Cref{alg:main_forall:init,alg:main_forall:while} of \Cref{alg:main_forall}). We then sample any value outside the excluded intervals (in this case, we can pick any value); for illustrational purposes (as for all samples in this example), we choose $2$ (\Cref{alg:main_forall:sample} of \Cref{alg:main_forall}). As $\Matrix$ does not evaluate to a value yet, we call the algorithm with the current partial sample to handle the next variable (\Cref{alg:main_forall:call} of \Cref{alg:main_forall}).
	\begin{description}
		\item[\texttt{\hyperref[alg:main_exists]{exists}($s = (2)$)}] We start covering the real line with unsatisfiable intervals (\Cref{alg:main_exists:init,alg:main_exists:while} of \Cref{alg:main_exists}). We sample $x_2 = 3.5$ (\Cref{alg:main_exists:sample} of \Cref{alg:main_exists}) and find a satisfying sample. Now, we generalize to the feasible interval around $(2,3.5)$ as depicted in \Cref{fig:example:intervals}, which is bounded from below by $c_2$ and from above by $c_3$ (\Cref{alg:main_exists:sat} of \Cref{alg:main_exists}). Its projection is the satisfiable interval $(1,3)$ for $x_1$ that we return (\Cref{alg:main_exists:characterize} of \Cref{alg:main_exists}).
	\end{description}
	We store the received satisfying interval (\Cref{alg:main_forall:collect} of \Cref{alg:main_forall}). As there exist samples outside the set of satisfying intervals (\Cref{alg:main_forall:while} of \Cref{alg:main_forall}), we pick the next value $3.2$ for $x_1$ (\Cref{alg:main_forall:sample} of \Cref{alg:main_forall}):
	\begin{description}
		\item[\texttt{\hyperref[alg:main_exists]{exists}($s = (3.2)$)}] We sample $x_2 = 2.75$ and find a satisfying sample. We generalize to the feasible interval bounded by $c_1$ and $c_3$. 
		Note that in the projection of the feasible interval, we take all constraints into account (as all constraints are part of the implicant), even if they do not have a real root at $x=3.2$ -- here, the discriminant of $c_2$ is added to the projection ensuring that no root of $c_2$ crosses the feasible interval. The resulting projection is the satisfiable interval $(3,\underline{3.5})$ for $x$. (The underlined value is an approximation).
	\end{description}
	Similarly, the received interval is stored, and we proceed with the sample $4$ for $x_1$:
	\begin{description}
		\item[\texttt{\hyperref[alg:main_exists]{exists}($s = (4)$)}]
		We sample $x_2 = 4$ (\Cref{alg:main_exists:sample} of \Cref{alg:main_exists}) to obtain the unsatisfiable interval $(3, \infty)$ (\Cref{alg:main_exists:unsat} of \Cref{alg:main_exists}) which we store in the set of unsatisfying intervals (\Cref{alg:main_exists:collect} of \Cref{alg:main_exists}). As this set does not cover the whole real line yet (\Cref{alg:main_exists:while} of \Cref{alg:main_exists}), we sample $x_2 = 2$ (\Cref{alg:main_exists:sample} of \Cref{alg:main_exists}) to obtain the unsatisfiable interval $(-\infty, 3.5)$ (\Cref{alg:main_exists:unsat} of \Cref{alg:main_exists}), which is again stored (\Cref{alg:main_exists:collect} of \Cref{alg:main_exists}). 
		The intervals cover the real line for $x_2$ (\Cref{alg:main_exists:while} of \Cref{alg:main_exists}), as depicted dashed in \Cref{fig:example:intervals}.
		We return the unsatisfiable interval $(\underline{3.5},\underline{4.5})$ for $x_1$ which is the projection of the generalization of the covering (\Cref{alg:main_exists:characterize2} of \Cref{alg:main_exists}).
	\end{description}
	As a recursive call returned an unsatisfiable interval, the algorithm terminates here by returning \unsat (\Cref{alg:main_forall:returnunsat} of \Cref{alg:main_forall}).
\end{description}

\section{Quantifier Elimination}
\label{sec:qe}

From now on, we also allow the input (a formula in prenex normal form) to contain parameters.

To extend the method for quantifier elimination, we could follow a NuCAD~\cite{brown2015nucad} like approach:
we could ``guess'' a sample point for all parameters at once, check the satisfiability of the formula using the method above and construct a cell around the sample point.
We would iterate this by guessing sample points outside the already constructed cells until no such sample points exist.
Finally, we would obtain a list of cells which are either satisfying or unsatisfying.

\begin{algorithm}[t]
	\Data{Global prefix \Prefix{k+1} and matrix \Matrix.}
	\Output{A solution formula for the parameters of $\Prefix{k+1}. \Matrix$.}
	
	\uIf{$k=0$}{
		$(f,O) := $ \FRecurse{$()$} \tcp*{\Cref{alg:main}}
		\lIf{$f = \sat$}{\Return{\tru}}
		\lElse{\Return{\fals}}
	}
	\uElse{
		$(\psi,\cellsymb) := $ \FParameter{$()$} \tcp*{\Cref{alg:main_parameter}}
		\Return{$\psi$}
	}
	\caption{\texttt{user\_call\_qe()} \label{alg:user:qe}}
\end{algorithm}

\begin{algorithm}[t]
	\Data{Global prefix \Prefix{k+1} and matrix \Matrix.}
	\Input{Sample point $s=(s_1,\ldots,s_{i-1}) \in \R^{i-1}$.}
	\Output{$(\psi,\cellsymb)$ where $\psi$ characterizes all satisfying cells over $s$ within $s \times \cellsymb.I$.}

	$\I = \emptyset$ \;
	$\psi := \fals$ \;
	\While{$\bigcup_{\cellsymb \in \I } \cellsymb.I \neq \R$}{
		$s_i :=$ \FSampleOutside{$\I$}  \;
		\uIf{$\Matrix[s \times s_i] = \fals$}{
			$(T,O) := (\fals,\FEnclosingInterval{$s \times s_i$})$ \tcp*{\Cref{alg:get-enclosing-interval}}
		}
		\uElseIf{$\Matrix[s \times s_i] = \tru$}{
			$(T,O) := (\tru,\FEnclosingInterval{$s \times s_i$})$ \tcp*{\Cref{alg:get-enclosing-interval}}
		}
		\uElseIf{$i<k$}{
			$(T,O) :=$ \FParameter{$s \times s_i$} \tcp*{recursive call}
		}
		\Else(it holds $k\leq i<n$) {
			$(f,O) :=$ \FRecurse{$s \times s_i$} \tcp*{\Cref{alg:main}, recursive call}
			\lIf{$f  = \sat$}{
				$T := \tru$ 
			} \lElse{
				$T := \fals$ 
			}
		}
		$\I := \I \cup \{O\}$ \;
		$\psi := \psi \vee (  \FIndexedRootFormula{$O$}  \wedge T )$ \label{alg:main_parameter:formula}
	}

    $\cellsymb :=$ \FCharacterizeCovering{$s$, $\I$} \tcp*{\Cref{alg:characterize-covering}}

	\Return{$(\psi,\cellsymb)$}

	\caption{\texttt{parameter($s$)} \label{alg:main_parameter}}
\end{algorithm}

We propose an alternative approach in \Cref{alg:user:qe,alg:main_parameter} which builds upon the cylindrical algebraic covering method in order to obtain a tree-like description of cylindrically arranged cells which the parameter space. This allows for potentially smaller solution formulas and more intuitive reasoning over its structure, as discussed in~\cite{abraham2021a}.

The idea is to consider the parameters first, and treat them similarly to existentially quantified variables with a few differences:
Instead of returning as soon as we find a satisfiable cell, we collect both satisfiable and unsatisfiable cells until the whole real line is covered by them; analogously to \Cref{alg:main_exists,alg:main_forall}, we compute a generalization of this covering, this time it consists of satisfiable and unsatisfiable cells instead of only satisfiable or only unsatisfiable ones. These ``mixed'' coverings are required to ensure that all satisfiable cells of the parameter space are enumerated. Simultaneously, a symbolic description of the satisfiable cells in the parameters is constructed as a formula in \Cref{alg:main_parameter:formula} of \Cref{alg:main_parameter}.

For the latter, we employ the concept of \emph{indexed root expressions}~\cite{brown1999}:

\begin{definition}[Indexed Root Expression]
	Let $p \in \Q[x_1,\ldots,x_{i}]$ and $j \in \mathbb{N}_{>0}$.
	An \emph{indexed root expression} is a function $\iroot{p}{j}: \R^{i-1} \to \R \cup \{ \mathrm{undefined} \}$; for all $r \in \R^{i-1}$, $\iroot{p}{j}(r)$ is the $j$-th real root of the univariate polynomial $p(r,x_{i}) \in \Q[x_{i}]$ (or undefined if this root does not exist).
\end{definition}

We use constraints over indexed root expressions to describe intervals symbolically: 

\begin{definition}[Indexed Root Formula]
	Let $\cellsymb$ be an implicit cell in main variable $x_i$. The set $\Xi_l(\cellsymb) = \{ \iroot{p}{j} \mid p \in \cellsymb.P \wedge \iroot{p}{j}(s_1,\ldots,s_{i-1}) = \cellsymb.I.l \}$ contains all indexed root expressions bounding the $i$-th component of $\cellsymb$ from below, and the set $\Xi_u(\cellsymb) = \{ \iroot{p}{j} \mid p \in \cellsymb.P \wedge \iroot{p}{j}(s_1,\ldots,s_{i-1}) = \cellsymb.I.u \}$ contains all indexed root expressions bounding the $i$-th component of $\cellsymb$ from above.	
	The \emph{indexed root formula of $\cellsymb$} is the formula $\texttt{indexed\_root\_formula($\cellsymb$)} = \bigwedge_{\xi \in \Xi_l(\cellsymb)} \xi < x_i \wedge \bigwedge_{\xi \in \Xi_u(\cellsymb)} x_i < \xi$.
\end{definition}

\begin{example}
	Consider the polynomials $P = \{ x_2+1,x_1^2+x_2^2-2,x_1-1\}$, the sample point $s=(0,0)$, and the implicit cell $\cellsymb = (P,s,(-1,\sqrt{2}))$. Then the indexed root formula of $\cellsymb$ is $\iroot{x_2+1}{1} < x_2 \wedge x_2 < \iroot{x_1^2+x_2^2-2}{2}$.
\end{example}

\subsection{Simplification of Results}

The above construction of the formula describing the resulting parameter space is naive, as there are low-hanging fruits to simplify these results further. To do so, we store the parameter space in a tree-like \emph{covering data structure} $(\psi_\text{interval}, L)$ where $\psi_\text{interval}$ is an indexed root formula describing some interval symbolically and the \emph{label} $L$ is equal to $\tru$, $\fals$ (we call those \emph{leafs}) or is a list of covering data structures (we call those \emph{inner nodes}) sorted as in \texttt{compute\_cover}. We do not only store the satisfying cells, but also the unsatisfying ones (indicated using the first two values for $L$).

We then simplify by traversing the data structure depth-first: 
\begin{enumerate}
	\item We merge neighboring children $T_1,T_2$ (where $T_1$ is sorted before $T_2$) if both of them do not have children and are labelled both $\tru$ or both $\fals$. The resulting node $T$ is labelled with the common label, and the interval formula is built from the lower bounds of $T_1$ and the upper bounds of $T_2$.  We iterate until we cannot merge any more.
	\item If a node does have a single child which does not have children itself and is labelled either $\tru$ or $\fals$, we erase this child, and take over its label to the current node.
\end{enumerate}
Note that by the merging described in the first step, we remove unnecessary atoms from the output formula: The two corresponding cells are neighbors in the same cylinder and thus overlap in the whole cylinder, allowing us to remove the overlapping bounds. In the second step, the single child corresponds to a cell without bounds, thus the value of the corresponding variable does not affect the output of the algorithm.

The construction of a formula from this data structure is straight-forward by iterating through the tree and encoding the satisfiable cells. However, we propose another optimization during this process, based on the observation that for some subtrees, there are fewer unsatisfying cells than satisfying ones. In those cases, we want to encode the unsatisfying cells instead of the satisfying ones. We encode a leaf in the obvious way, and each inner node as follows:
\begin{enumerate}
	\item We count the number of leaf children labelled with $\tru$ and $\fals$ respectively.
	\item If the first number is smaller, we encode all leaf children labelled with $\tru$ and recursively encode all inner children.
	\item If the latter number is smaller, we encode all leaf children labelled with $\fals$ and recursively encode all inner children. For the latter, we pass a flag indicating that the encoding result should describe unsatisfying cells.
	\item We build the disjunction of all cells. Depending on which of the two cases we entered and the received flag, we negate this disjunction. Afterwards, we build the conjunction of the (possibly negated) disjunction with the indexed root formula of the current node.
\end{enumerate}

Last but not least, we remark that the techniques for CAD-based quantifier elimination could also be adapted for further simplifying the output; in particular the \emph{SimpleSF} algorithm described in~\cite[Section 5.2]{brown1999} is promising. However, to apply it, we would possibly need to refine the given covering to a decomposition first, which would require computational and technical effort.

\subsection{Elimination of Indexed Root Expressions}
\label{sec:indexedroots}

While indexed root expressions are outside the language admitted by standard non-linear real arithmetic, equivalent ``pure'' non-linear real arithmetic formulas can be constructed with some effort. In the following, we discuss some possibilities for doing so, based on common techniques also used in other algorithms. They are not specific to the CAlC method, but worth noting in this context.

A \emph{sign condition} of a set $P$ of polynomials in $x_1,\ldots,x_i$ assigns a sign (positive, negative or zero) to each polynomial; a point in $\reals^i$ \emph{satisfies} a sign condition if the evaluation of each polynomial at $s$ corresponds to its assigned sign.
\emph{Thom's lemma} (see~\cite[pp. 325--326]{mishra1993}) states that the maximal set satisfying some sign condition of a univariate polynomial and all its derivatives is either empty or connected. In other words, we can describe intervals using constraints whose defining polynomials are derivatives of some polynomial.

Given a formula $\varphi$ (containing indexed root expressions) that describes a single cell (which can be extracted efficiently from the covering data structure) and a sample $s \in \reals^i$ that lies in the cell, we can eliminate all indexed root expressions using a generalization of Thom's lemma to multivariate polynomials obtained by viewing them as univariate polynomials in $x_i$ with polynomial coefficients. We may use this generalization in different ways, resulting in the following three different approaches:
\begin{itemize}
	\item Let $P$ be the set of defining polynomials of $\varphi$, and $P' = \{ \frac{\partial^k}{\partial x_i^k} p(x_1,\ldots,x_i) \mid p \in P, i = \level{p}, k=0,\ldots,\vdeg{x_i}{p} \}$ be the set containing all $p \in P$ and all of their partial derivatives $\frac{\partial^k}{\partial x_i^k} p$ with respect to the respective main variable $x_i$. We compute a set of sign conditions on the polynomials in $P'$ such that the union of their described cells is equal to the cell described by $\varphi$. We can do so by starting from $s$, storing the sign condition that satisfies $s$; and then move to an adjacent cell by changing a single sign in the sign condition. If there is a sample $s'$ that satisfies both $\varphi$ and the adapted sign condition, we store the new sign condition. Otherwise we go back. We iterate this until this yields no more sign conditions (when starting from any sign condition in the set); then, all adjacent cells do not satisfy $\varphi$. Finally, we encode all obtained sign conditions by polynomial constraints.
	\item We compute the \emph{augmented projection} of $P$, that is the closure of the CAD projection and partial derivatives w.r.t. the respective main variable, as described in~\cite[p. 144]{collins1975}. This projection yields a CAD that is \emph{projection definable}~\cite[Definition 3]{brown1999}, i.e., the CAD can be encoded using sign-conditions on the projection factors. The difference to the previous approach is the following: all derivatives are delineable in the respective lower-level cells, meaning that the resulting formula is cylindrically arranged. Thus, the resulting formula might require fewer atoms by encoding the cells in a tree-like manner. %
	\item Using techniques from~\cite{brown1999}, we can optimize the previous approach by reducing the amount of derivatives that are added to the projection set - not all derivatives are necessary for obtaining a projection-definable CAD.
\end{itemize}

The computation of derivatives required in the above approaches might lead to large output formulas and additional expensive CAD computations. It is thus desirable to reduce this effort. This is achieved by encoding each cell separately, but encoding all cells in combination in order to eliminate redundancies of sign conditions on the derivatives (effectively reducing the amount of required derivatives). The work in~\cite{brown1999} proposes efficient algorithms for this task; we thus could feed the output formula into these algorithms (either all at once or incrementally as in~\cite[Section 6]{brown1999}).

Certainly, obtaining a small output formula without indexed root expression requires additional computational cost and implementation effort. Moreover, formulas containing indexed root expressions allow for smaller encodings.

\section{Implicant Calculation}
\label{sec:implicants}

The presented algorithms rely on \texttt{implicant\_polynomials} to compute an implicant that generalizes the reason why a formula simplifies to either \tru or \fals at a given sample point, as defined in \Cref{def:implicant}. We propose different variants for computing such an implicant.
In all of these variants, the choice of the implicant is generally not unique. We always define the set of all implicants that can be computed in the described way. In an implementation, we could either collect all possible implicants and decide afterwards which one to take, or, to reduce the computational effort, compute a single good implicant.

We remark that the algorithms presented in this section are kept simple for pedagogical and experimental purposes. We assume that the formula is in prenex normal form and its matrix $\varphi$ is in negation normal form. 

\subsection{Evaluation Only}

We omit Boolean reasoning and simply evaluate the formula using the given sample point.
Let $s \in \reals^i$ such that $\varphi[s] = \fals$. We compute the set of implicants recursively:
\begin{align*}
	\text{implicants}_s(c) & = \{ \}  \text{ if } c \text{ is a constraint and } c(s) \neq \fals \\
	\text{implicants}_s(c) & = \{ \{ \neg c \} \} \text{ if }  c \text{ is a constraint and } c(s) = \fals \\
	\text{implicants}_s(\psi_1 \wedge \psi_2) & = \text{implicants}_s(\psi_1) \cup \text{implicants}_s(\psi_2) \\
	\text{implicants}_s(\psi_1 \vee \psi_2) & = \text{implicants}_s(\psi_1) \times \text{implicants}_s(\psi_2)
\end{align*}
The case $\varphi[s] = \tru$ is analogous, as we basically switch the cases $\psi_1 \wedge \psi_2$ and $\psi_1 \vee \psi_2$.

We interpret the resulting sets of constraints as conjunction.

\subsection{Boolean Propagation}

The above approach misses important Boolean information:
\begin{example}
	Consider $\varphi_1 = (x<0 \vee y=0) \wedge (x<1 \vee y\neq0)$, which is unsatisfiable over $x=2$, but our partial evaluation does not detect that the resulting formula is logically equivalent to $\fals$: As the set of implicants for $y=0$ and $y\neq0$ is empty, the sets of implicants of all subformulas and the formula are empty as well.
\end{example}
We thus incorporate Boolean reasoning that resembles the Boolean propagation implemented within SAT solvers.
Consider \Cref{alg:implicant:base,alg:implicant:propagate}. Let $s \in \reals^i$. If \texttt{implicants($\varphi$, $(\varphi)$, $s$)} (\texttt{implicants($\varphi$, $(\neg\varphi)$, $s$)}) is non-empty, then $\varphi[s] \equiv \fals$ ($\varphi[s] \equiv \tru$), and the algorithm already computes all possible implicants.
The algorithm takes a formula $\varphi$, a sequence of decisions $D$ (its entries are subformulas of $\varphi$ or their negation) and a sample point $s$ as input. The algorithm maintains for each subformula two sets $T(\psi)$ and $F(\psi)$ of \emph{reasons} (which are sets of formulas maintained continuously through the algorithm) implying the subformula evaluates to \tru or \fals, respectively, given the decisions in $D$ and the sample $s$. For every decision $d \in D$, its set of reasons $T(d)$ contains only the empty set, which represents an ``unconditional'' reason. Every subformula $\psi$ and its negation $\neg \psi$ share the same sets (e.g. $T(\psi)$ is the same as $F(\neg \psi)$). Note that the \texttt{evaluate} subroutine corresponds to the evaluation only approach. The algorithm yields an implicant whenever it finds a conflict, that is, some subformula has both reasons to evaluate to \tru and to \fals, contradicting the assumption that $\varphi$ (respectively, $\neg\varphi$) holds.
We note that this scheme can be extended to arbitrary Boolean operators such as exclusive-or and negation.
\begin{example}
	Consider $\varphi_1$ and the sample $x=2$ from the previous example again. Using the sample, we conclude $F(x<0)=\{\{x<0\}\}$ and $F(x<1)=\{\{x<1\}\}$. Additionally, we decide $T(\varphi_1)=\{\emptyset\}$ and propagate $T(x<0 \vee y=0)=\{\emptyset\}$ and $T(x<1 \vee y\neq0)=\{\emptyset\}$. By propagation via the disjunctions, we conclude that $T(y=0) = F(y\neq0) = \{\{x<1\}\}$ and $T(y\neq0) = F(y=0) = \{\{x<1\}\}$. Thus, $T(y=0) \times F(y=0) = \{\{x<0, x<1\}\}$ is the set of implicants.
\end{example}

\begin{figure}[htb]
\noindent
\begin{minipage}[t]{0.54\textwidth}
	\vspace{0pt} 
	\begin{algorithm}[H]
		\caption{\texttt{implicants(}$\varphi$, $D$, $s$\texttt{)}}
		\label{alg:implicant:base}

		\SetKwFunction{FEvaluate}{evaluate}
		\SetKwFunction{FPropagate}{propagate}

		\lForEach{subformula $\psi$ of $\varphi$}{$T(\psi) := \emptyset$}
		\lForEach{subformula $\psi$ of $\varphi$}{$F(\psi) := \emptyset$}
		\lForEach{$d$ in $D$}{$T(d) := \{ \emptyset \}$}
	
		\ForEach{atom $c$ in $\varphi$}{
			\lIf{$c(s)=\tru$}{$T(c) := \{\{ c \}\}$}
			\lElseIf{$c(s)=\fals$}{$F(c) := \{\{ \neg c \}\}$}
		}
	
		\While{$T$ or $F$ changed}{
			\ForEach{subformula $\psi$ of $\varphi$}{
				\FEvaluate{$\psi$} \;
				\FPropagate{$\psi$} \;
			}
		}
	
		\Return{$\bigcup_{\psi \in \varphi} T(\psi) \times F(\psi)$}
	\end{algorithm}

	\begin{algorithm}[H]
		\caption{\texttt{explore(}$\varphi$, $D$, $s$\texttt{)}}
		\label{alg:implicant:explore}

		\SetKwFunction{FImplicants}{implicants}
		\SetKwFunction{FImplicantsExpl}{explore}

		$I_D := $ \FImplicants{$\varphi$, $D$, $s$} \;
		\lIf{$I_D \neq \emptyset$}{
			\Return{$I_D$}
		}
		\textbf{choose} subformula $\psi$ of $\varphi$ s.t. $T(\psi) = F(\psi) = \emptyset$ at the end of the \texttt{implicants} call \;
		\lIf{no such $\psi$ exists}{
			\Return{$\emptyset$}
		}
		$I_{(D,\psi)} := $ \FImplicantsExpl{$\varphi$, $(D,\psi)$, $s$} \;
		$I_{(D,\neg\psi)} := $ \FImplicantsExpl{$\varphi$, $(D,\neg\psi)$, $s$} \;
		\Return{$I_{(D,\psi)} \times I_{(D,\neg\psi)}$}
	\end{algorithm}
\end{minipage}
\begin{minipage}[t]{0.44\textwidth}
	\vspace{0pt} 
	\begin{algorithm}[H]
		\caption{\texttt{evaluate} and \texttt{propagate}}
		\label{alg:implicant:propagate}

		\SetKwProg{Fn}{def}{\string:}{}
		\SetKwFunction{FEvaluate}{evaluate}
		\SetKwFunction{FPropagate}{propagate}

		\Fn{\FEvaluate{$\psi_1 \wedge \psi_2$}}{
			$T(\psi_1 \wedge \psi_2) \ {\cup}{=}\  T(\psi_1) \times T(\psi_2)$ \;
			$F(\psi_1 \wedge \psi_2) \ {\cup}{=}\  F(\psi_1) \cup F(\psi_2)$ \;
		}

		\Fn{\FEvaluate{$\psi_1 \vee \psi_2$}}{
			$T(\psi_1 \vee \psi_2) \ {\cup}{=}\  T(\psi_1) \cup T(\psi_2)$ \;
			$F(\psi_1 \vee \psi_2) \ {\cup}{=}\  F(\psi_1) \times F(\psi_2)$ \;
		}

		\Fn{\FPropagate{$\psi_1 \wedge \psi_2$}}{
			$T(\psi_1) \ {\cup}{=}\  T(\psi_1 \wedge \psi_2)$ \;
			$T(\psi_2) \ {\cup}{=}\  T(\psi_1 \wedge \psi_2)$ \;
			$F(\psi_1) \ {\cup}{=}\  F(\psi_1 \wedge \psi_2) \times T(\psi_2)$ \;
			$F(\psi_2) \ {\cup}{=}\  F(\psi_1 \wedge \psi_2) \times T(\psi_1)$ \;
		}

		\Fn{\FPropagate{$\psi_1 \vee \psi_2$}}{
			$T(\psi_1) \ {\cup}{=}\  T(\psi_1 \vee \psi_2) \times F(\psi_2)$ \;
			$T(\psi_2) \ {\cup}{=}\  T(\psi_1 \vee \psi_2) \times F(\psi_1)$ \;
			$F(\psi_1) \ {\cup}{=}\  F(\psi_1 \vee \psi_2)$ \;
			$F(\psi_2) \ {\cup}{=}\  F(\psi_1 \vee \psi_2)$ \;
		}
	\end{algorithm}
\end{minipage}
\end{figure}

\subsection{Boolean Exploration}

The previous approach is able to detect some Boolean conflicts by propagation. However, there are formulas which are not satisfiable by the Boolean structure already, which can only be detected by exploration:
\begin{example}
	Consider $\varphi_2 = ((z=0 \vee \varphi_1) \wedge (z \neq 0 \vee \varphi_1))$ and the sample $x=2$. This formula is clearly unsatisfiable. However, although we would evaluate the literals ($F(x<0)=\{\{x<0\}\}$ and $F(x<1)=\{\{x<1\}\}$) and decide $T(\varphi_2)=\{ \emptyset \}$, we would not be able to propagate further. We thus need to check the cases $z=0$ and $z\neq0$.
\end{example}
Consider \Cref{alg:implicant:explore}. Let $s \in \reals^i$. If \texttt{explore($\varphi$, $(\varphi)$, $s$)} (\texttt{explore($\varphi$, $(\neg\varphi)$, $s$)}) is non-empty, then $\varphi[s] \equiv \fals$ ($\varphi[s] \equiv \tru$), and the algorithm again computes all possible implicants.
The algorithm naively assumes formulas to evaluate to \tru and \fals, calls recursively, and if both choices lead to a conflict, then it combines the obtained implicants.
We note that also a partial exploration of the search space is possible.
\begin{example}
	Assume we call $\texttt{explore}$ on $\varphi_2$ from the previous example, $D=(\varphi_2)$ and $s=(2)$ (i.e. $x=2$). The implicants call would return the empty set, we thus choose a subformula, for example $\psi := z=0$ thus recursively call $\texttt{explore}$ on $\varphi_2$ and $D=(\varphi,z=0)$ and on $\varphi_2$ and $D=(\varphi,z\neq0)$. The first call would set the reasons of $z=0$ to $\emptyset$, propagate these reasons and detect an immediate conflict from this decision, thus return the implicant $\{ x<0,x<1 \}$; similarly, the second call would return $\{ x<0,x<1 \}$. Thus the overall call would return $\{ x<0,x<1 \} \times \{ x<0,x<1 \} = \{ x<0,x<1 \}$ as implicant.
\end{example}

\subsection{Inprocessing}
\label{sec:inprocessing}

The implicant could be further simplified using preprocessing techniques, such as Gröbner bases (\cite{huang2016} predicts whether preprocessing using Gröbner bases benefits a CAD computation) or techniques described in~\cite{brown2020}. We do not need bookkeeping of relations of input and output constraints, as we do not need to return infeasible subset as in the CDCL(T) framework.

\section{Exploiting the Quantifier Structure}
\label{sec:splitting}

So far, we assumed the input to be in prenex normal form. In the following, we lift this restriction to better exploit the quantifier structure by switching the quantifier order and solving independent subformulas separately. Potentially, we can rule out parts of the formula as being irrelevant in the current branch. To some degree, this is already facilitated by choosing implicants; however, implicants only consider the Boolean structure and completely ignore the quantifiers! 

\begin{example}
	Consider the formula $\varphi = \forall x \forall y.\; (\psi_1(x,y) \wedge \psi_2(x,y))$, which is logically equivalent to $\forall x \forall y.\; \psi_1(x,y) \wedge \forall x \forall y.\; \psi_2(x,y)$. We can check $\forall x \forall y.\; \psi_1(x,y)$ and $\forall x \forall y.\; \psi_2(x,y)$ separately and combine the results accordingly. For each subformula, we can even choose a different variable ordering.
\end{example}

\subsection{Input Transformation}
\label{sec:inputtrans}

We proceed as follows: Assume that we transformed the input formula to negation normal form, that is we pushed the negations into the formula using the double negation rule and De Morgan's rules such that only the atoms occur negatively in the formula. Afterwards, we push the quantifiers as far as possible into the formula using the following rules:

\begin{align}
	Q x Q y.\; \varphi(x,y) & \equiv  Q y Q x.\; \varphi(x,y) \text{ for } Q \in \{ \exists, \forall \}\tag{Swapping Quantifiers}\label{rule:swapping} \\
	Q x.\; \varphi & \equiv \varphi \text{ for } Q \in \{ \exists, \forall \}\tag{Null Quantification} \\
	\forall x.\; (\varphi(x) \wedge \psi(x)) & \equiv \forall x.\; \varphi(x) \wedge \forall x.\; \psi(x)\tag{Distribution over Conjunction} \\
	\exists x.\; (\varphi(x) \vee \psi(x)) & \equiv \exists x.\; \varphi(x) \vee \exists x.\; \psi(x)\tag{Distribution over Disjunction}\\
	Q x.\; (\varphi(x) \circ \psi)  & \equiv (Q x.\; \varphi(x)) \circ \psi\text{ for } Q \in \{ \exists, \forall \},\ \circ \in \{ \wedge, \vee \}\tag{Prenex Law for \{Con,Dis\}junction}
\end{align}

Note that we can apply the \Cref{rule:swapping} for formulas in prenex normal form as well. In this case, it is rather clear how to make use of this rule - we just choose a typical variable ordering that works well for the CAD. In the general case, the order in which we apply \Cref{rule:swapping} and the other rules might affect the result. Thus, the variable ordering has an influence on how we can distribute the quantifiers.

\begin{example}
	Consider the formula $\varphi = \forall x \forall y.\; (\psi_1(x,y) \wedge \psi_2(y))$. When keeping the variable ordering, we can transform the formula to $\forall x.\; (\forall y.\; \psi_1(x,y) \wedge \forall y.\; \psi_2(y))$, although $\psi_2(y)$ does not depend on $x$! When switching the quantifiers, we can transform the formula to $\forall y \forall x.\; \psi_1(x,y) \wedge \forall y.\; \psi_2(y)$.
\end{example}

We could transform the formula iteratively after each assignment of a variable in the covering algorithm by plugging it in to all constraints and propagating the truth values:

\begin{example}
	Consider the formula $\varphi = \forall x \forall y.\; [(x>0 \wedge \psi_1(x,y) \wedge \psi_2(x,y)) \vee (x\leq 1 \wedge \psi_3(x,y))]$. Due to the disjunction, we cannot pull any quantifier into the formula.

	Now assume we picked $x=2$, thus the second case $\vartheta = (x\leq 1 \wedge \psi_3(x,y))$ of $\varphi$ simplifies to $\fals$. An implicant for $\vartheta$ w.r.t. $2$ is $x>1$. We use this information to rewrite the formula to
	\begin{align*}
		& \forall x \forall y.\; [(x>0 \wedge \psi_1(x,y) \wedge \psi_2(x,y)) \vee x>1] \\
		\equiv & \forall x.\; [(x>0 \wedge \forall y.\; \psi_1(x,y) \wedge \forall y.\; \psi_2(x,y)) \vee x>1]
	\end{align*}
	which allows us to split the problem into multiple ones.
\end{example}

\subsection{Adaption of the Algorithm}

\begin{algorithm}[ht!]
	\Input{NRA formula $\varphi$, sample point $s=(s_1,\ldots,s_{i-1}) \in \R^{i-1}$.}
	\Output{$(\sat, \cellsymb)$ or $(\unsat, \cellsymb)$ where $s \times \cellsymb.I$ can or can not be extended to a model for any $s_i \in \cellsymb.I$. In both cases, $\cellsymb$ describes how $s$ can be generalized.
	}

	\uIf{$\varphi[s] \equiv \fals$}{ \label{alg:main_new:false}
		\Return{$(\unsat,\FEnclosingInterval{$\varphi$, $s$})$} \tcp*{\Cref{alg:get-enclosing-interval}}\label{alg:main_new:gei1}
	}
	\uElseIf{$\varphi[s] \equiv \tru$}{ \label{alg:main_new:true}
		\Return{$(\sat,\FEnclosingInterval{$\varphi$, $s$})$} \tcp*{\Cref{alg:get-enclosing-interval}}\label{alg:main_new:gei2}
	}
	\Else(it holds $i<n$){
		\textbf{transform} $\varphi$ based on $s$ \label{alg:main_new:transform} \;
		\uIf{$\varphi = \exists x_{i+1}.\; \psi$}{ \label{alg:main_new:exists}
			\Return{\FExists{$\psi$, $s$}} \tcp*[f]{\Cref{alg:main_exists_new}}
		}
		\uElseIf{$\varphi = \forall x_{i+1}.\; \psi$}{ \label{alg:main_new:forall}
			\Return{\FForall{$\psi$, $s$}} \tcp*[f]{\Cref{alg:main_forall_new}}
		}
		\uElseIf{$\varphi = \bigvee_{j=1,\ldots,k} \psi_j$}{ \label{alg:main_new:vee}
			$\I_\text{unsat} := \emptyset$ \;
			\textbf{sort} $\psi_1,\ldots,\psi_k$ according to some heuristic \;
			\ForEach{$j=1,\ldots,k$}{
				$(f,O) :=$ \FRecurse{$\psi_j$, $s$} \;
				\lIf{$f = \sat$}{
					\Return{$(\sat,O)$}
				} 
				\lElseIf{$f = \unsat$}{
					$\I_\text{unsat} := \I_\text{unsat} \cup \{O\}$ 
				}
			}
			\Return{$(\unsat, \bigcap_{\cellsymb \in \I_\text{unsat}} \cellsymb)$}
		}
		\uElseIf{$\varphi = \bigwedge_{j=1,\ldots,k} \psi_j$}{ \label{alg:main_new:wedge}
			$\I_\text{sat} := \emptyset$ \;
			\textbf{sort} $\psi_1,\ldots,\psi_k$ according to some heuristic \;
			\ForEach{$j=1,\ldots,k$}{
				$(f,O) :=$ \FRecurse{$\psi_j$, $s$} \;
				\lIf{$f = \unsat$}{
					\Return{$(\unsat,O)$}
				} 
				\lElseIf{$f = \sat$}{
					$\I_\text{sat} := \I_\text{sat} \cup \{O\}$ 
				}
			}
			\Return{$(\sat, \bigcap_{\cellsymb \in \I_\text{sat}} \cellsymb)$}
		}
		
	}

	\caption{\texttt{recurse($\varphi$, $s$)} \label{alg:main_new}}
\end{algorithm}

\begin{figure}[htb]
\noindent
\begin{minipage}[t]{0.499\textwidth}
	\vspace{0pt}  
	\begin{algorithm}[H]
		\Input{NRA formula $\varphi$, sample point $s=(s_1,\ldots,s_{i-1}) \in \R^{i-1}$.}
		\Output{see \Cref{alg:main_new}}
	
		$\I_\text{unsat} := \emptyset$ \;
		\While{$\bigcup_{\cellsymb \in \I_\text{unsat}} \cellsymb.I \neq \R$}{
			$s_i :=$ \FSampleOutside{$\I_\text{unsat}$}  \;
			$(f,O) :=$ \FRecurse{$\varphi$, $s \times s_i$} \;
			\uIf{$f = \sat$}{
				$\cellsymb := \FCharacterizeInterval{$s$, $O$}$ \;
				\Return{$(\sat, \cellsymb)$}
			} 
			\uElseIf{$f = \unsat$ }
			{
				$\I_\text{unsat} := \I_\text{unsat} \cup \{O\}$ \;
			}
		}	
		$\cellsymb :=$ \FCharacterizeCovering{$s$, $\I_\text{unsat}$} \;
		\Return{$(\unsat, \cellsymb)$}
		
		\caption{\texttt{exists($\varphi$, $s$)} \label{alg:main_exists_new}}
	\end{algorithm}
\end{minipage}%
\hfill
\begin{minipage}[t]{0.499\textwidth}
	\vspace{0pt}
	\begin{algorithm}[H]
		\Input{NRA formula $\varphi$, sample point $s=(s_1,\ldots,s_{i-1}) \in \R^{i-1}$.}
		\Output{see \Cref{alg:main_new}}
	
		$\I_\text{sat} := \emptyset$ \;
		\While{$\bigcup_{\cellsymb \in \I_\text{sat}} \cellsymb.I \neq \R$}{
			$s_i :=$ \FSampleOutside{$\I_\text{sat}$}  \;
			$(f,O) :=$ \FRecurse{$\varphi$, $s \times s_i$} \;
			\uIf{$f = \sat$}{
				$\I_\text{sat} := \I_\text{sat} \cup \{O\}$ \;
			} 
			\uElseIf{$f = \unsat$}
			{
				$\cellsymb := \FCharacterizeInterval{$s$, $O$}$ \;
				\Return{$(\unsat, \cellsymb)$} \;
			}
		}
		$\cellsymb := $ \FCharacterizeCovering{$s$, $\I_\text{sat}$} \;
		\Return{$(\sat, \cellsymb)$}
	
		\caption{\texttt{forall($\varphi$, $s$)} \label{alg:main_forall_new}}
	\end{algorithm}
\end{minipage}
\end{figure}

We adapt our algorithm to work on general non-linear real arithmetic formulas involving quantifiers and to explore independent subformulas separately, as depicted in \Cref{alg:main_new,alg:main_exists_new,alg:main_forall_new}.
First note that the formula $\varphi$ is now a parameter of all algorithms called recursively. The \texttt{recurse} algorithm (\Cref{alg:main_new}) now handles the cases where the formula is detected to be either equivalent to $\tru$ (\Cref{alg:main_new:true}) or $\fals$ (\Cref{alg:main_new:false}) after plugging in $s$. We note that \texttt{implicant\_polynomials}, which is indirectly called in \Cref{alg:main_new:gei1,alg:main_new:gei2}, needs to be extended for quantified formulas as input, still computing a quantifier-free implicant. We modify \Cref{def:implicant} as follows:

\begin{definition}[Implicant]
	\label{def:implicant_quantifier}
	Let $s \in \R^i$ be a (partial) sample point and $\varphi$ be a formula with free variables $x_1,\ldots,x_i$. If $\varphi[s] \equiv \tru$, then the quantifier-free formula $\psi$ is an \emph{implicant of $\varphi$ with respect to $s$} if
	\[ \psi[s] = \tru \land \left( \psi \Rightarrow \varphi \right) \]
	and the constraints of $\psi$ are of level at most $i$ and contained in $\varphi$.
	Otherwise, if $\varphi[s] \equiv \fals$, then the quantifier-free formula $\psi$ is an \emph{implicant of $\varphi$ with respect to $s$} if
	\[ \psi[s] = \tru \land \left( \psi \Rightarrow \neg\varphi \right) \]
	and the constraints of $\psi$ are of level at most $i$ and contained in $\varphi$.
\end{definition}

\begin{example}
	Consider the formula $\varphi = x_1<0 \vee \forall x_2.\; (x_2>0 \rightarrow x_1>2)$ and assume the sample $(1)$, then $\varphi[(1)]=\fals$ and $0\leq x_1 \wedge x_1\leq 2$ is an implicant for $\varphi$ w.r.t. $(1)$.
\end{example}

If neither of the first two cases hold (\Cref{alg:main_new:false,alg:main_new:true}), we transform the formula based on the current sample point in \Cref{alg:main_new:transform} to facilitate splitting as described at the end of \Cref{sec:inputtrans}.
Then, we do a case distinction on the formula's structure: If the formula is a quantified formula (\Cref{alg:main_new:exists,alg:main_new:forall}), we call the algorithm handling the respective quantifier. If the formula is a conjunction or disjunction (\Cref{alg:main_new:vee,alg:main_new:wedge}), we call \Cref{alg:main_new} recursively on the individual subformulas. Analogously to the exists (forall) case, for disjunctions (conjunctions), we return early once one of the recursive calls returns a satisfying (unsatisfying) cell; otherwise, we collect all unsatisfying (satisfying) cells and build their intersection.

Such an \emph{intersection $\cellsymb'$ of implicit cells $\cellsymb_1,\ldots,\cellsymb_k$} where $\cellsymb_1.s=\ldots=\cellsymb_k.s$  is defined such that $\cellsymb'.P = \cup_{j=1,\ldots,k} \cellsymb_j.P$, $\cellsymb.s=\cellsymb_1.s$, $\cellsymb'.I = \cap_{j=1,\ldots,k} \cellsymb_j.I$. Note that, by definition, the intersection is non-empty, as the implicit cells share the same sample point which is also contained in the intersection. Further, the algorithm only applies the intersection on implicit cells with the same sample point.

We emphasize that the splitting mechanism in \Cref{alg:main_new:vee,alg:main_new:wedge} overlaps with the implicant calculation. In an efficient implementation, this would be interleaved with the calculation of implicants, possibly considering the whole Boolean structure of $\varphi$ at once to traverse the search tree ``non-chronologically''. We could compute valid combinations of recursive calls and choose to ``best'' combination according to some metric.  

\Cref{alg:main_exists_new,alg:main_forall_new} depict the new algorithms for handling existentially and universally quantified variables. Compared to \Cref{alg:main_exists,alg:main_forall}, the calls to \texttt{get\_enclosing\_cell} are moved to \texttt{recurse}. Analogously to \Cref{sec:quantified}, we could extend this approach for parameters in quantifier elimination problems; as this is straight-forward, we omit it here.

\section{Proof System}
\label{sec:proofs}

This section replaces the CAD projection algorithms given above by the proof system introduced in~\cite{nalbach2024levelwise}. This proof system changes the view of ``computing projections of polynomials in one variable less'' as in classical CAD formulations to ``computing properties that a lower-level cell $S$ needs to fulfil such that we can describe a cell in the cylinder $S \times \reals$''; these properties ultimately prove that a polynomial is sign-invariant on a cell.

The motivation is twofold: Firstly, the proof system allows for more efficient projections, as its modular formulation can consider many fine-grained optimizations while keeping the algorithmic aspects clean. Secondly, the proof system might be a step towards a proof-producing procedure for non-linear real arithmetic formulas, i.e., generating (formal) proofs that can be mechanically verified:  While the CAlC method would define a high level proof strategy (as motivated in~\cite{abraham2021a,abraham2020} and mentioned in~\cite{kremer2022}), the proof system would provide a fine(r)-grained layer (as discussed in~\cite[Section 8.1]{nalbach2024levelwise}).

For the following section, we assume the reader being familiar with the proof system in~\cite{nalbach2024levelwise}. We briefly recall the most important definitions before presenting an extension for coverings.

\subsection{Preliminaries}

The work in~\cite{nalbach2024levelwise} introduces a proof system for the \emph{single cell construction}~\cite{brown2015} algorithm, which generalizes unsatisfiable sample points to unsatisfiable cells in MCSAT-based solvers~\cite{kremer2020phd}. The input is a set $P \subseteq \rationals[x_1,\ldots,x_n]$ of polynomials and a sample point $s \in \reals^n$, and we aim to find a description of a cell $R \subseteq \reals^n$ such that $s \in R$ and each $p \in P$ is sign-invariant in $R$. The algorithm iteratively computes \emph{symbolic intervals} $\Isymb_n, \ldots, \Isymb_1$ for the variables $x_n,\ldots,x_1$ such that for each $i=1,\ldots,n$, the bounds of $\Isymb_i$ depend on $x_1,\ldots,x_{i-1}$.

\begin{definition}[Symbolic Interval~\cite{nalbach2024levelwise}]
	A \emph{symbolic interval of level $i$} is either a tuple $(\text{section}, b)$ where $b$ is an indexed root expression with domain $\reals^{i-1}$, or $(\text{sector}, l, u)$ where each of $l$ and $u$ is either $-\infty$/$\infty$ respectively or an indexed root expression with domain $\reals^{i-1}$.
	Intervals of the former represent \emph{sections} $\{ (r,b(r)) \mid r \in R \}$ (where $R \subseteq \reals^{i-1}$ such that $b$ is defined), intervals of the latter represent \emph{sectors} $\{ (r,r') \mid r \in R, r' \in (l(r),u(r)) \}$ (where $R \subseteq \reals^{i-1}$ such that $l$ and $u$ are defined).
\end{definition}

To go into more detail, assume we construct the interval $\Isymb_n$. To maintain sign- resp. order-invariance of $P$ in the resulting cell $R$, we first choose $\Isymb_n$ such that its boundaries are defined by roots of $P$ and $s_n$ is either equal to $b(s_{[n-1]})$ or contained in the interval $(l(s_{[n-1]}),u(s_{[n-1]}))$. We now need to compute the lower-level intervals such that the described underlying cell $\proj{R}{[n-1]}$ is small enough such that $\Isymb_n$ describes a sign-invariant interval for $P$ above each point $r \in \proj{R}{[n-1]}$. To do so, we need to ensure that no root $\xi: \reals^{n-1} \to \reals$ of $P$ crosses a boundary of $\Isymb_n$, meaning that $\xi(r) \leq l(r)$ or $u(r) \leq \xi(r)$ for all $r \in \proj{R}{[n-1]}$. We need to maintain a certain \emph{ordering} on the roots:

\begin{definition}[Indexed Root Ordering~\cite{nalbach2024levelwise}]
	An \emph{indexed root ordering of level $i$} is a relation $\preceq$ on a set $\Xi$ of indexed root expressions with domain $\reals^i$ such that its reflexive and transitive closure $\preceq^t$ is a partial order on $\Xi$. We say it \emph{matches} some $s \in \reals^{i-1}$ if all $\Xi$ are defined at $s$ and $\xi \preceq \xi'$ implies $\xi(s) \leq \xi'(s)$ for all $\xi,\xi' \in \Xi$.
\end{definition}

We first determine an ordering $\preceq$ on all the roots such that $\xi \preceq^t l$ or $u \preceq^t \xi$ for all roots $\xi$ of $P$ that we can ``see'' at $s_{[i-1]}$ (where $\preceq^t$ is the transitive and reflexive closure of $\preceq$). We use CAD projection tools to maintain that this ordering is maintained on $\proj{R}{[n-1]}$, i.e. we compute a set of polynomials $P'$ in $n-1$ variables whose sign-invariance guarantees this property. We iteratively apply the described procedure on the lower levels, until intervals for all variables are computed. 

Observe that above, we allow for some flexibility in the choice of the indexed root ordering by exploiting transitivity. Throughout our procedure, there are more such choices possible, as well as many optimizations in the CAD projection theory which are only applicable in certain cases. A proof system keeps the algorithm maintainable while exploiting these cases: We define \emph{properties of level $i$} which are functions $q: \{ R \mid R \subseteq \reals^i \} \to \{ 0,1 \}$ for $i=1,\ldots,n$. Each proof rule has a single property as consequent; its antecedents are ``smaller'' properties (according to some ordering in the properties) and side conditions which ``enable'' the proof rule.

Given $R \subseteq \reals^i$, we define properties $\sample{s}(R)=1$ iff $s \in R$, $\connected{i}(R)=1$ iff $R$ is connected, $\irordering{\preceq,s}(R)=1$ iff $\preceq$ matches $s$ and $R$ maintains the ordering $\preceq$, $\del{p}(R)=1$ iff the polynomial $p$ is analytically delineable on some connected superset of $R$, and $\representation{\Isymb,s}(R)=1$ iff $\Isymb$ is defined at $s$ and the $i$-th dimension of $R$ is described by $\Isymb$. For more details, we refer to~\cite{nalbach2024levelwise}.

\subsection{Proof Rules for Coverings}

The CAlC algorithm shares similarities with the single cell construction algorithm. In fact, the computation of a single symbolic interval corresponds to \Cref{alg:characterize-interval}. In the following, we will extend the proof system to also cover \Cref{alg:characterize-covering}.

We start by defining a property that holds iff a set of symbolic intervals covers the whole real line if we substitute a sample point $s$:

\begin{definition}
    \label{def:prop:covering}

    Let $i \in \posints$, $R \subseteq \reals^i$, $s \in \reals^{i-1}$, and $\texttt{C} = (\Isymb_1, \ldots, \Isymb_k)$ be a sequence of symbolic intervals of level $i$.

    The property $\covering{\texttt{C},s}$ holds on $R$ if and only if for every $R' \subseteq \reals^{i+1}$ with $\proj{R'}{[i]} = R$ there exists a $j \in \{ 1,\ldots,k \}$ such that the property $\representation{\Isymb_j,s}$ holds on some superset of $R'$.
\end{definition}

This property can be proven using the following rule. It assumes that the intervals are ordered by their lower bounds and are not redundant as in \texttt{compute\_covering}. We then use an indexed root ordering which ensures that the bounds of neighboring intervals overlap.

\begin{lemma}
    \label{def:map:covering}

    Let $i \in \posints$, $R \subseteq \reals^i$, $s \in \reals^{i-1}$, $\texttt{C} = (\Isymb_1, \ldots, \Isymb_k)$ be a sequence of symbolic intervals of level $i$, and $\preceq$ be an indexed root ordering of level $i$.

	Assume that $\texttt{C}$ fulfils the following conditions:
	\begin{itemize}
        \item $\Isymb_1.l=-\infty$ and $\Isymb_k.u=\infty$,
        \item $\Isymb_{j}.l(s) < \Isymb_{j+1}.l(s) \vee (\Isymb_{j}.l(s) = \Isymb_{j+1}.l(s) \wedge (\Isymb_{j} \text{ section} \wedge \Isymb_{j+1} \text{ sector}))$ for $j = 1,\ldots,k{-}1$,
        \item $\Isymb_{j}.u(s) < \Isymb_{j+1}.u(s) \vee (\Isymb_{j}.u(s) = \Isymb_{j+1}.u(s) \wedge (\Isymb_{j} \text{ sector} \wedge \Isymb_{j+1} \text{ section}))$ for $j = 1,\ldots,k{-}1$, and
        \item $\Isymb_{j+1}.l(s) < \Isymb_j.u(s) \vee (\Isymb_{j+1}.l(s) = \Isymb_j.u(s) \wedge (\Isymb_{j+1} \text{ section} \vee \Isymb_j \text{ section}))$ for $j = 1,\ldots,k{-}1$.
    \end{itemize}

    Assume that $\preceq$ matches $s$, and for $j = 1,\ldots,k{-}1$ it holds $\Isymb_{j+1}.l \preceq^t \Isymb_j.u$.

    \begin{align*}
        \sample{s}(R),\ \connected{i}(R),\ \irordering{\preceq,s}(R),\ \\ \forall j = 1,\ldots,k{-}1.\; (\del{\Isymb_j.u.p,s}(R) \wedge \del{\Isymb_{j+1}.l.p,s}(R))   && \vdash \covering{\texttt{C},s}
    \end{align*}
\end{lemma}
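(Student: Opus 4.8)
The plan is to prove the rule semantically: we assume $R \subseteq \reals^i$ satisfies all four antecedents listed on the left of $\vdash$, plus the stated side conditions on $\texttt{C}$ and $\preceq$, and we show that $\covering{\texttt{C},s}$ holds on $R$ in the sense of \Cref{def:prop:covering}. So fix an arbitrary $R' \subseteq \reals^{i+1}$ with $\proj{R'}{[i]} = R$; we must exhibit a $j$ such that $\representation{\Isymb_j,s}$ holds on some superset of $R'$. The first step is to unpack what $\del{\Isymb_j.u.p,s}(R)$ and $\irordering{\preceq,s}(R)$ give us: by analytic delineability of each bound polynomial on a connected superset of $R$, together with $\connected{i}(R)$, each indexed root expression $\Isymb_j.l$ and $\Isymb_j.u$ is a well-defined continuous function on $R$ (or is $\pm\infty$), and by the indexed-root-ordering property these functions are globally ordered on $R$ consistently with how they are ordered at $s$. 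In particular the side conditions evaluated at $s$ (the ``sorted, non-redundant, neighbours overlap'' conditions) propagate from the point $s$ to all of $R$.

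The second, and main, step is the covering argument itself. Since $\proj{R'}{[i]} = R$ and $\sample{s}(R)$, we have $s \in R$, so it suffices to show that the $i{+}1$-st coordinate fibre of $R'$ over any $r \in R$ is contained in the region described by one of the $\Isymb_j$ at $r$; more precisely, one argues that the sets $\{(r,r') \mid r \in R,\ r' \text{ in } \Isymb_j \text{ at } r\}$ for $j=1,\ldots,k$ cover $R \times \reals$, so each $R'$ (being a subset of $R \times \reals$ over $R$) is covered, and then — because $R'$ is a single set, not a partition — one uses that $\proj{R'}{[i]}=R$ is connected together with the continuity/ordering of the bounds to pin down a \emph{single} $j$ that works for all of $R'$. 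The covering of $R \times \reals$ at a fixed $r$ reduces to the one-dimensional statement: the real line is covered by the intervals/points $(\Isymb_1.l(r), \Isymb_1.u(r)], [\Isymb_1.u(r) \text{ or } \ldots], \ldots$ Here $\Isymb_1.l = -\infty$, $\Isymb_k.u = \infty$ handle the two ends, and for $1 \le j \le k-1$ the condition $\Isymb_{j+1}.l \preceq^t \Isymb_j.u$ — transported to $R$ via $\irordering{\preceq,s}(R)$ — gives $\Isymb_{j+1}.l(r) \le \Isymb_j.u(r)$, so consecutive intervals abut or overlap with no gap; the careful section/sector bookkeeping in the side conditions ensures the shared endpoint is actually covered (a section contributes its single point, a sector its open interval, and the disjunctive side conditions are exactly what is needed so that at a coincident endpoint at least one of the two neighbours is a section covering that point). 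This is the step I expect to be the main obstacle: not any single estimate, but the careful case analysis over section-vs-sector at coincident bounds, and the argument that a \emph{single} index $j$ serves all of the connected set $R'$ rather than different indices for different points.

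The third step is bookkeeping: having identified the index $j$, one checks directly that $\representation{\Isymb_j,s}$ holds on the superset $R' \cup (\text{the full } \Isymb_j\text{-cell over } R)$ of $R'$ — i.e. that $\Isymb_j$ is defined at $s$ (which follows from $\preceq$ matching $s$ and the delineability antecedents, since all the relevant indexed root expressions are defined at $s$) and that the $i{+}1$-st dimension of that cell is described by $\Isymb_j$ (immediate from the definition of the cell). I would also remark that the hypotheses on $\preceq$ matching $s$ and on the strict/weak inequalities at $s$ are exactly the run-time conditions guaranteed by \texttt{compute\_cover} in \Cref{alg:characterize-covering} after sorting by lower bound and removing redundant cells, so the rule is applicable precisely in the situation in which the algorithm invokes it; this ties the lemma back to \Cref{proj-overlap2} and closes the loop with \Cref{alg:characterize-covering}.
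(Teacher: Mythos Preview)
Your proposal follows the same argument as the paper's proof sketch: use analytic delineability of the bound polynomials on the connected set $R$ to make the indexed root expressions into well-defined continuous functions, use $\irordering{\preceq,s}(R)$ to transport the overlap inequality $\Isymb_{j+1}.l \le \Isymb_j.u$ from $s$ to all of $R$, and observe that the bulleted side conditions are exactly the sorted/non-redundant/overlapping output conditions of \texttt{compute\_cover}. Your plan is considerably more detailed than the paper's sketch (which does not spell out the section/sector endpoint case analysis or the ``single $j$'' discussion), but the underlying strategy is the same.
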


\begin{proof}[Proof (Sketch)]
	We ensure that all polynomials defining the lower and upper bounds of the symbolic intervals are analytically delineable on a connected set which contains the current sample point, that means that all their root functions are well-defined over that set. Further, we maintain an ordering of these root functions which ensures that the symbolic intervals cover the cylinder over that cell. For the latter, first observe that the bullet points encode the same requirements as the output of \texttt{compute\_covering}, i.e. that the intervals are sorted according to their lower bounds, that they are overlapping and not redundant. To maintain these overlaps over the underlying set, we require that the indexed root ordering fulfils that the lower and upper bound of all neighboring pairs of symbolic intervals remain in that same order.
\end{proof}

\subsection{Adaption of Algorithms}

We adapt the CAlC algorithm as follows: First, instead of representing an implicit cell with a tuple $(P,s,I)$ where $P$ is a set of polynomials that are sign- or order-invariant in the cell, we represent it as a tuple $(Q,s,I)$ where $Q$ is a set of properties that hold in the cell. We adapt \Cref{alg:get-enclosing-interval,alg:characterize-interval,alg:characterize-covering} to work with the proof system, as given in \Cref{alg:get-enclosing-interval-rules,alg:characterize-interval-rules,alg:characterize-covering-rules}.

\Cref{alg:get-enclosing-interval-rules} initializes the set $Q$ with the sign-invariance of the implicant's polynomials and applies some basic rules such as factorization, and computes the interval above the given sample. \Cref{alg:characterize-interval-rules} adds connectedness of the constructed cell to the set of properties, as the proof system does not always produce descriptions of connected sets. It then isolates the real roots, determines a symbolic interval, an indexed root ordering, and a set $E$ for which we refer to~\cite{nalbach2024levelwise} for details. It then applies all proof rules to the point where the interval above the given sample can be determined. \Cref{alg:characterize-covering-rules} computes a sequence of non-redundant, ordered intervals representing a covering, isolates the roots of each implicit cell, and then determines all symbolic intervals, and a single indexed root ordering which both protects each cell individually and ensures that the boundaries of the symbolic intervals overlap. By choosing a single root ordering, we might be able to rule out redundancies in the root orderings and thus obtain a more efficient projection; however, for now, we do not make use of this possibility and compute orderings for each cell separately, and the trivial ordering that maintains the covering afterwards. The algorithm then applies all proof rules to the point where the interval above the given sample can be determined.

\begin{algorithm}[t]
	\Data{Global matrix \Matrix.}
	\Input{Sample point $s \in \R^{i}$ such that $\Matrix[s] \equiv \fals$ or $\Matrix[s] \equiv \tru$.}
	\Output{A satisfiability-invariant implicit cell $\cellsymb$ containing $s$.} 
	$P := \FImplicant{$\Matrix$, $s$}$ \;
	$Q := \{ \sgninv{p} \mid p \in P \}$ \;
	\textbf{apply} proof rules to $Q$ until only properties $\sgninv{p}$ where $p$ is irreducible remain \;
	$\cellsymb := (Q,s,\FComputeInterval{$s$, $\{ p \mid \sgninv{p} \in Q \}$})$ \tcp*{\Cref{alg:compute-interval}}

	\Return{$\cellsymb$} 

	\caption{\texttt{get\_enclosing\_cell($s$)}}\label{alg:get-enclosing-interval-rules}
\end{algorithm}

\begin{algorithm}[t]
	\Input{Sample point $s \in \R^i$ and an implicit cell $\cellsymb=(\cdot,s \times \cdot,\cdot)$ of level $i+1$.}
	\Output{A satisfiability-invariant implicit cell $\cellsymb$ containing $s$.}
	$(Q,s',\cdot) := \cellsymb$ \;
	$Q := Q \cup \{ \connected{i+1} \}$ \;
	\textbf{compute} set $\Xi$ of symbolic roots of $\{ p \mid \sgninv{p} \in Q \}$ \;
	\textbf{choose} representation $(\Isymb,E,\preceq)$ of $\Xi$ w.r.t. $s'$ \;
	\textbf{apply} proof rules to $Q$ considering $(\Isymb,E,\preceq)$ until only properties $\sgninv{p}$ (of level $i$) where $p$ is irreducible remain \;
	$\cellsymb := (Q,s,\FComputeInterval{$s$, $\{ p \mid \sgninv{p} \in Q \}$})$ \tcp*{\Cref{alg:compute-interval}}
	\Return{$\cellsymb$} 
	
	\caption{\texttt{characterize\_cell($s$, $\cellsymb$)} \label{alg:characterize-interval-rules}}
\end{algorithm}

\begin{algorithm}[t]
	\Input{Sample point $s \in \R^i$ and a set $\I$ of implicit cells of level $i+1$ such that for all $\cellsymb \in \I$, $\cellsymb=(\cdot,s \times \cdot,\cdot)$.}
	\Output{A satisfiability-invariant implicit cell $\cellsymb$ containing $s$.}
	$(\cellsymb_1,\ldots,\cellsymb_k) :=$ \FComputeCover{$\I$} \tcp*{\cite[Section 4.4.1]{abraham2021}}

	\ForEach{$j = 1,\ldots,k$}{
		$(Q_j,s'_j,\cdot) := \cellsymb_j$ \;
		\textbf{compute} set $\Xi_j$ of symbolic roots of $\{ p \mid \sgninv{p} \in Q_j \}$ \;
	}
	\textbf{choose} representations $(\Isymb_j,E_j,\preceq)$ of $\Xi_j$ w.r.t. $s'_j$ for $j=1,\ldots,k$ such that $\preceq$ fulfils the requirement of \Cref{def:map:covering}  \;
	$Q := \cup_{j=1,\ldots,k} Q_j \cup \{ \covering{(\Isymb_1,\ldots,\Isymb_k),s} \}$ \;
	\textbf{apply} proof rules to $Q$ considering $\Isymb_j,E_j$ for $j=1,\ldots,k$ and $\preceq$ until only properties $\sgninv{p}$ (of level $i$) where $p$ is irreducible remain \;
	$\cellsymb := (Q,s,\FComputeInterval{$s$, $\{ p \mid \sgninv{p} \in Q \}$})$ \tcp*{\Cref{alg:compute-interval}}
	\Return{\cellsymb} 
	
	\caption{\texttt{characterize\_covering($s$, $\I$)} \label{alg:characterize-covering-rules}}
\end{algorithm}

\section{Experimental Evaluation}

\subsection{Implementation and Heuristics}
\label{sec:heuristics}

Our implementation incorporates all the algorithms in this paper except the elimination of indexed root expressions for quantifier elimination (\Cref{sec:indexedroots}) and the techniques for exploiting the quantifier structure (\Cref{sec:splitting}); we postponed the implementation of the first due to the high effort, and the latter as this would require deeper changes of our data structures.
We use McCallum's projection operator, which is technically incomplete. However, the implementation of our proof system is complete: In case a polynomial is nullified, we add some of its partial derivatives to ensure its order invariance, as suggested in \cite[Section 5.2]{mccallum1985}.

\subsubsection{Sampling}

When assigning a variable in \texttt{sample\_outside}, we choose the value according to the following scheme: If there are no unsatisfying intervals, we take $0$. Otherwise, we chose an integer below all intervals if possible. Otherwise, we choose an integer above all intervals if possible. Otherwise, we choose a sample point between existing intervals; again, we prefer integers or nice rational numbers if possible, as choosing algebraic numbers leads to expensive computations.

\subsubsection{Variable Orderings}

Variable orderings have a huge impact on the computation of a CAD~\cite{brown2007,dolzmann2004,nalbach2019}.
For technical reasons, our implementation supports static variable orderings only, i.e. we determine a fixed variable ordering based on the set of input constraints and do not adapt the ordering during the computation. This ordering determines the order in which the variables are assigned. However, the CAlC method admits to freely choose any variable to be assigned next; exploiting this is part of future work. 

For quantifier-free formulas, the static orderings are:

\begin{description}
    \item[\textsc{Feature based}] This class of variable orderings computes a set of features of variables within the set of input polynomials (such as average degree, sum of degrees, ...) and sorts the variables by their features (we sort by one feature, break ties using a second or third feature). The first such heuristic was suggested by Brown~\cite{brown2004}. We use a recent improvement obtained using machine-learning techniques from~\cite{pickering2024}.
    \item[\textsc{Max univariate}] Assuming that all preceding variables have been substituted, we select the variable next in which the most constraints are univariate.
\end{description}

To adapt these orderings for quantified formulas, we apply each ordering separately for each \emph{quantifier block}. Two variables $x_i$ and $x_j$ are in the same quantifier block if and only if $Q_{i'} = Q_{j'}$ for all $i',j' \in \{ i,\ldots,j \}$ in the prefix $\Prefix{1}$.

\subsubsection{Implicants}

Experience from~\cite{nalbach2019}, where disabling Boolean decisions in our MCSAT implementation performed best, led to the assumption that Boolean reasoning might make unfavorable decisions for non-linear arithmetic problems. Usually, the algebraic part is harder to solve than the Boolean structure (which is often not complex in the corresponding SMT-LIB benchmark set).

To investigate the impact of the ``algebraic'' complexity on the running time, we implement all three variants for computing implicants (\Cref{sec:implicants}) in a (naive) straight-forward way: We compute all possible implicants and choose the best implicant afterwards (see below).

\begin{description}
	\item[\textsc{Evaluation}] Straight-forward implementation.
	\item[\textsc{Propagation}] We implement full propagation. In a preprocessing step, we add clauses like $\neg (p<0) \vee \neg (p>0)$ to facilitate Boolean propagations: Without this clause, if $p<0$ would be assumed to be true, although we would conclude that $p\leq 0$ cannot hold by Boolean reasoning, we would not for $p>0$ causing additional effort in the theory solving.
	\item[\textsc{Exploration}] Note that this is a rather inefficient implementation of a SAT solver (i.e. without clause-learning etc). Future implementations might consider a more efficient algorithm.
\end{description}

Although we transform the input formula to prenex normal form, we do not eliminate Boolean operators such as exclusive-or and the like, but extend the implicant computation to support these operators.

\subsubsection{Inprocessing}

\begin{description}
	\item[\textsc{Gröbner bases}]  The work in \cite{wilson2012} suggest that preconditioning formulas using Gröbner bases speed up CAD computations. We thus use them for inprocessing (\Cref{sec:inprocessing}), which is applied whenever possible.
\end{description}

\subsubsection{Implicant Selection Heuristic}

After we compute a set of implicants using one of the variants described in \Cref{sec:implicants}, we choose the best according to one of the following criteria:

\begin{description}
    \item[\textsc{Size}] We take the implicant with the minimal number of constraints.
    \item[\textsc{Feature based}] We take modified features from~\cite{pickering2024} to choose the best implicant. For every set of constraints, we compute (1) the sum (over all defining polynomials) of the average of the total degrees of the monomials, (2) the average of the total degrees of all monomials., (3) the sum of the total degrees of all monomials. We sort the sets first by (1), breaking ties with (2), breaking ties with (3).
    \item[\textsc{Sum of total degrees (Sotd)}] We take the implicant with the minimal sum of total degrees, i.e. the sum of total degrees of all monomials of all polynomials in the implicant. The work in~\cite{dolzmann2004} suggests that this predicts the size and computation time of a CAD. We break ties using the size of the implicant.
    \item[\textsc{Reverse sotd}] The opposite of \textsc{Sum of total degrees} (for illustrational purposes).
\end{description}

\subsection{Evaluation}
\label{sec:experiments}

We implemented the CAlC algorithm in our SMT solver \texttt{SMT-RAT}~\cite{corzilius2015}. The implementation is complete for all discussed problems, i.e. we fully support checking quantifier-free and quantified formulas as well as quantifier elimination. Note that our implementation does not convert the input matrix to conjunctive normal form, but directly works on the Boolean structure.
For algebraic computations, we rely on \texttt{libpoly}~\cite{jovanovic2017libpoly}; for factorization and Gröbner bases, we use \texttt{CoCoALib}~\cite{abbott}.
The default variant of \texttt{SMT-RAT} version \texttt{24.02} uses CAlC for quantified problems, and our MCSAT implementation for formulas that can be transformed to quantifier-free formulas. For quantifier-free problems, we also apply standard preprocessing techniques. Support for quantifier elimination needs to be enabled using a flag before compiling.
The tool is available at \href{https://github.com/ths-rwth/smtrat/}{{https://github.com/ths-rwth/smtrat/}}. 

Our implementation is modular in order to evaluate the described variants.
We conduct our experiments on Intel\textregistered Xeon\textregistered Platinum 8160 CPUs with 2.1GHz per core. We use SMT-LIB's \emph{QF\_NRA} (quantifier-free) and \emph{NRA} (with quantifiers) benchmark sets~\cite{barrett2017}.
The source code, instructions for reproducing the experiments and our raw results are all available at \href{https://doi.org/10.5281/zenodo.13366085}{\url{https://doi.org/10.5281/zenodo.13366085}}.

\subsection{Evaluation of Variants}

We start by evaluating the variants of our algorithm for checking satisfiability on the \emph{QF\_NRA} benchmark set due to the greater amount of (non-trivial) benchmarks.
We define a default variant which solved the most instances in preliminary experiments: it uses the \textsc{Max univariate} variable ordering, \textsc{Propagation} for Boolean reasoning, and selects implicants based on the \textsc{Sotd} criteria. All other variants use this configuration, but vary one of these criteria. The results are shown in \Cref{fig:variants}.

\begin{table}
    \caption{Evaluation results of variants. Rows: Number of solved satisfiable and unsatisfiable instances (\emph{sat} and \emph{unsat}), their sum (\emph{solved}), instances where time or memory is exceeded (\emph{timeout} and \emph{memout}). The Default column is the variant that uses \textsc{Propagation} for Boolean reasoning, \textsc{Sotd} for the selection heuristic, \textsc{Max univariate} for the variable ordering, and no preprocessing.}
    \label{fig:variants}

    \centering
	
	\begin{tabular}{l|rr|rrr|r|r|r}
		\toprule
		{} & \multicolumn{2}{l}{Boolean reasoning} & \multicolumn{3}{l}{Selection heuristic} & \multicolumn{1}{l}{Var. order.} & \multicolumn{1}{l}{Inproc.} & \  \\
		{} &       \textsc{Expl.} &   \textsc{Eval.} &     \textsc{Rev. sotd} &   \textsc{Size} & \textsc{Feat. b.} &    \textsc{Feat. b.} &       \textsc{G.B.} & Default \\
		\midrule
		sat         &              5169 &  5153 &             5164 &   5169 &          5174 &              5233 &          5161 &    5184 \\
		unsat       &              4547 &  4738 &             4985 &   5025 &          5046 &              4588 &          5047 &    5048 \\
		timeout     &              2090 &  1801 &             1538 &   1491 &          1471 &              1944 &          1484 &    1454 \\
		memout      &               328 &   442 &              447 &    449 &           443 &               369 &           442 &     448 \\
		solved      &              9716 &  9891 &            10149 &  10194 &         10220 &              9821 &         10208 &   10232 \\
		\bottomrule
	\end{tabular}		  
\end{table}

The choice of the variable ordering has a high impact. One variable ordering is better on the satisfiable instances while the other is on unsatisfiable ones. Further, \textsc{Feature Based} solves $122$ instances not solved by \textsc{Max Univariate}; the other way round, it is $533$. Thus, there is potential for future improvements.

The implicant selection heuristic has a moderate impact, as shown by the numbers of \textsc{Sotd} (see Default column) and its reverse variant \textsc{Rev. sotd}. This indicates a certain variety in the set of implicants from which we select one, but on average only $77\%$ of the generated implicants are used for computing cells. The difference of all meaningful variants (\textsc{Size}, \textsc{Feature Based}, \textsc{Sotd}) is not big, i.e., the virtual best of all variants solves $10250$ instances, only $18$ more than the \textsc{Sotd}. It is unclear whether significant improvements to this heuristics are possible.

Regarding the Boolean reasoning, we observe that \textsc{Propagation} is better than \textsc{Evaluation} (it solves $354$ new instances and loses only $13$ instances), however, \textsc{Exploration} performs worse than \textsc{Evaluation} (it gains only $4$ instances while losing $520$). The first is explained by \Cref{fig:boolean_propagation_off}: \textsc{Propagation} requires far fewer implicants than \textsc{Evaluation}, likely because conflicts are detected earlier using Boolean propagation. We would expect a similar effect in \Cref{fig:boolean_propagation_exploration}, but  \textsc{Exploration} very rarely needs fewer implicants than \textsc{Propagation}; instead, profiling reveals that \textsc{Exploration} spends orders of magnitudes more time in the Boolean propagation and exploration than \textsc{Propagation} (\textsc{Exploration} spends $80\%$ of the time for Boolean reasoning on $114$ instances (which it solved, all of them in than $5$ seconds); meanwhile, \textsc{Propagation} only spends $1\%$ of the time for Boolean reasoning on $110$ of these instances). This is likely due to our rather basic implementation (unoptimized data structures, no clause learning, no watched literals, backtracking is always done to the last UIP instead of the first) to generate \emph{all possible implicants} and choose the best one instead of generating a single good one. Given the large differences on unsat instances in particular, further improvements seem possible.

Also, always applying Gröbner bases to the implicant (as done in the \textsc{Gröbner} variant) results in fewer solved instances than without. As mentioned above, we might need a heuristic that decides when to apply Gröbner bases, as suggested in~\cite{huang2016}.

\begin{figure}
    \begin{subfigure}{0.49\textwidth}
		\centering
        \includegraphics[scale=0.7]{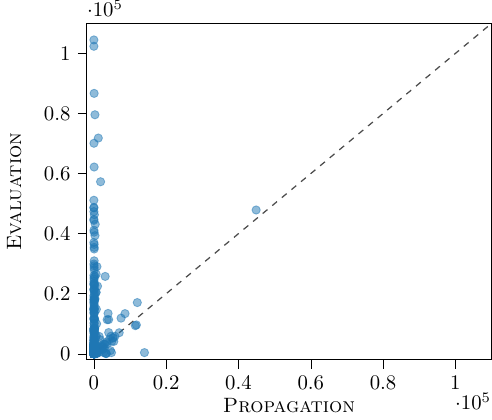}
        \caption{Influence of Boolean propagation.}
        \label{fig:boolean_propagation_off}
    \end{subfigure}
    \begin{subfigure}{0.49\textwidth}
		\centering
        \includegraphics[scale=0.7]{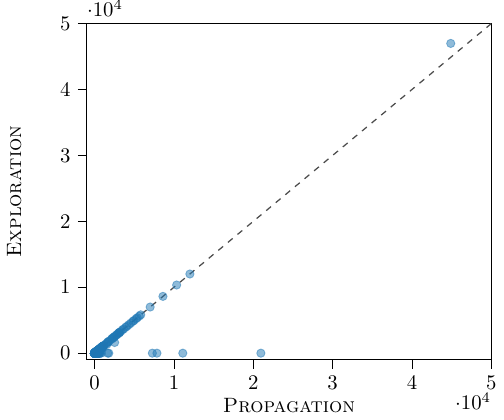}
        \caption{Influence of Boolean exploration.}
        \label{fig:boolean_propagation_exploration}
    \end{subfigure}\hfill
    \caption{Number of implicants that were used for the covering.}
\end{figure}

\subsection{Comparison with Other SMT Solvers}

\begin{figure}
    \begin{subfigure}[t]{0.59\textwidth}
		\centering
        \includegraphics[scale=0.65]{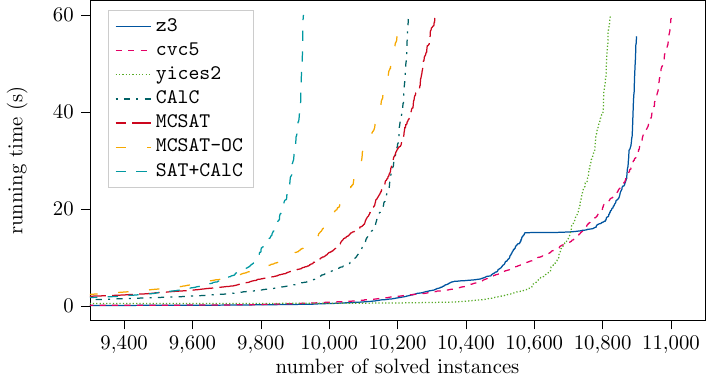}
        \caption{Solved instances on the QF\_NRA benchmarks.}
        \label{fig:results_qfnra}
    \end{subfigure}\hfill
    \begin{subfigure}[t]{0.39\textwidth}
		\centering
        \includegraphics[scale=0.65]{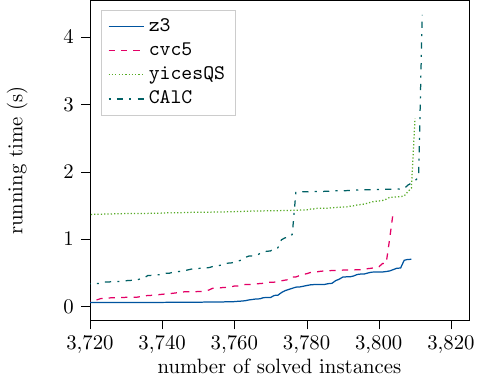}
        \caption{Solved instances on the NRA benchmarks.}
        \label{fig:results_nra}
    \end{subfigure}\hfill
    \caption{Performance profiles of SMT solvers.}
\end{figure}

We now compare the best variant (i.e. the default variant) of our algorithm \texttt{CAlC} with other solvers, both on \emph{QF\_NRA} and \emph{NRA}. For the former, we use \texttt{z3}~4.12.4, \texttt{cvc5}~1.1.0, \texttt{yices2}~2.6.4, an incremental implementation of the CAlC method in our solver as CDCL(T)-backend \texttt{SAT+CAlC} and two MCSAT implementations in our solver, namely \texttt{MCSAT-OC} which uses only the single cell construction from~\cite{nalbach2024levelwise} for theory solving, and \texttt{MCSAT} which additionally uses Fourier-Motzkin, interval constraint propagation~\cite{kremer2020phd}, virtual substitution~\cite{abraham2017}, and subtropical satisfiability~\cite{nalbach2023subtropical}. For the latter, we compare against \texttt{z3}~4.12.4, \texttt{cvc5}~1.1.0 and \texttt{yicesQS}~(Oct 22, 2023).

The results for quantifier-free benchmarks are depicted in \Cref{fig:results_qfnra}.
Clearly, all \texttt{SMT-RAT} variants solve less than the other solvers, partly due to less efficient data structures in particular for large instances. Further, \texttt{cvc5} makes heavy use of linearizations~\cite{kremer2022}, and \texttt{yices2} dynamically changes variable orderings during search.
\texttt{CAlC} is significantly faster than \texttt{SAT+CAlC}. \texttt{CAlC} solves more instances than \texttt{MCSAT-OC} within the given time limit and is generally faster; however, this is expected to change with a higher timeout.

\begin{table}
	\caption{Number of instances, their average number of clauses (after converting them to conjunctive normal form), and their average maximum degree of input polynomials, filtered by instances solved by both or only one (and not the other) solver.}
	\begin{subtable}{0.49\textwidth}
		\caption{Comparison with MCSAT.}
		\label{fig:mcsatcomparison}
		\begin{tabular}{l|ccc}
			\toprule
			& \multicolumn{3}{c}{instances solved by} \\
			& \texttt{MCSAT-OC} & both & \texttt{CAlC} \\
			\midrule
			\# instances & 248 & 9958 & 274 \\
			avg. \# clauses & 1635 & 377 & 320 \\
			avg. max. deg & 2.6 & 5.7 & 8.2 \\
			\bottomrule
		\end{tabular}
	\end{subtable}\hfill
	\begin{subtable}{0.49\textwidth}
		\caption{Comparison with classical CAlC.}
		\label{fig:satcovcomparison}
		\begin{tabular}{l|ccc}
			\toprule
			& \multicolumn{3}{c}{instances solved by} \\
			& \texttt{SAT+CAlC} & both & \texttt{CAlC} \\
			\midrule
			\# instances & 46 & 9879 & 353 \\
			avg. \# clauses & 1975 & 319	 & 1958	 \\
			avg. max. deg & 4.5	 & 5.8	 & 3.8	 \\
			\bottomrule
		\end{tabular}
	\end{subtable}
\end{table}

\Cref{fig:mcsatcomparison} compares \texttt{MCSAT-OC} and \texttt{CAlC}, confirming the impression that \texttt{CAlC} is worse on problems with complex Boolean structure, but has a solid advantage on instances containing hard polynomials. \Cref{fig:satcovcomparison} compares \texttt{SAT+CAlC} and \texttt{CAlC}, yielding a different picture: The instances solved by both solvers have relatively simple Boolean structures, but the instances solved by only one solver have complex Boolean structures - suggesting that the CAlC method particularly requires ``luck'' on problems with Boolean structure.

The results for quantified benchmarks in \Cref{fig:results_nra} look very promising for \texttt{CAlC}, as it solves $6$ instances not solved by \texttt{yicesQS} while loosing $4$. However, we should not draw further conclusions because most instances are solved quickly by all solvers and only $3$ instances remain unsolved.

We further note that all solvers agreed on the same status for each instance.

\subsection{Comparison with Other Quantifier Elimination Tools}

Finally, we evaluate \texttt{CAlC} against \texttt{QEPCAD~B} (used as backend through \texttt{Tarski}~1.28~\cite{vale-enriquez2018}) and \texttt{Redlog}~svn6658. We do not evaluate against commercial tools such as \texttt{Maple} or \texttt{Mathematica} because of the necessary licences.
We evaluate the tools on two benchmark sets: Firstly, we use a collection of CAD examples by David Wilson from Bath University~\cite{wilson2013}, consisting of $78$ formulas encoding relevant mathematical statements, including quantifier alternations and parameters. As this set is small, we also use SMT-LIB's \emph{NRA} benchmarks; some of these instances contain rational functions and other peculiar features of SMT-LIB, and we only use those instances which can be converted to inputs for \texttt{QEPCAD~B} and \texttt{Redlog} in a straight-forward way. All scripts for converting these benchmarks to the respective input formats are provided in the Zenodo repository.

\begin{figure}
	\begin{subfigure}[t]{.49\textwidth}
		\centering
		\includegraphics[scale=0.7]{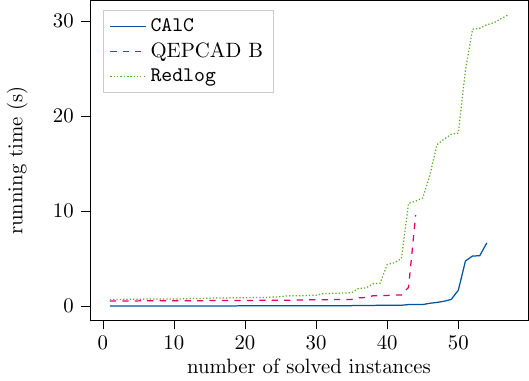}
		\caption{Solved instances on Bath's benchmarks.}
		\label{fig:results_qe}
	\end{subfigure}\hfill
	\begin{subfigure}[t]{.49\textwidth}
		\centering
		\includegraphics[scale=0.7]{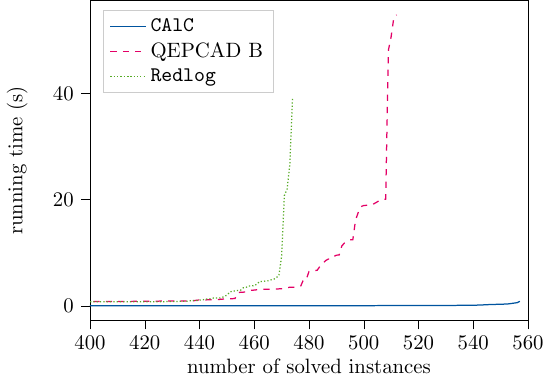}
		\caption{Solved instances on the NRA benchmarks from SMT-LIB.}
		\label{fig:results_qe_nra}
	\end{subfigure}

	\caption{Performance profile  of quantifier elimination tools.}
\end{figure}

The running times are depicted in \Cref{fig:results_qe,fig:results_qe_nra}. We note that \texttt{QEPCAD~B} fails on $25$ instances due to the incompleteness of McCallum's projection. \texttt{CAlC} is competitive on the Bath benchmarks, solving $9$ benchmarks more than \texttt{QEPCAD~B} and $4$ benchmarks less than \texttt{Redlog}. On the SMT-LIB benchmarks, \texttt{CAlC} outperforms the other solvers significantly.

\begin{figure}
	\begin{subfigure}[t]{.49\textwidth}
		\centering
		\includegraphics[scale=0.7]{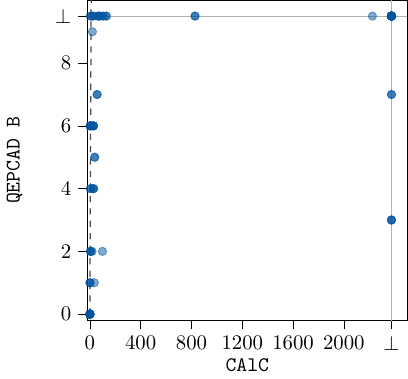}
		\caption{\texttt{CAlC} vs \texttt{QEPCAD~B}.}
		\label{fig:smtrat_qepcad}
	\end{subfigure}\hfill
	\begin{subfigure}[t]{.49\textwidth}
		\centering
		\includegraphics[scale=0.7]{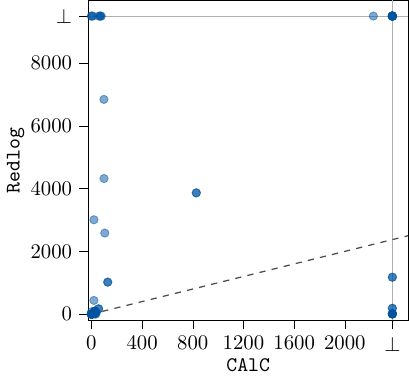}
		\caption{\texttt{CAlC} vs \texttt{Redlog}.}
		\label{fig:smtrat_redlog}
	\end{subfigure}

	\caption{Quality of the solution formula: We compare the number of atoms in the solution formula (indicated by the coordinates). Every point that is not on one of the gray lines represents an instance solved by both solvers. The $\bot$ line indicates that the corresponding solver did time out on the instance.}
\end{figure}

To measure the quality of the generated solution formulas describing the parameter space, we count the number of atoms, as depicted in \Cref{fig:smtrat_qepcad,fig:smtrat_redlog}. Clearly, \texttt{QEPCAD~B} is superior to \texttt{CAlC}. Still, \texttt{CAlC} produces significantly smaller solution formulas than \texttt{Redlog} on many instances.

To verify the correctness of the solution formulas produced by \texttt{CAlC}, we used \texttt{Tarski}~1.40 to check whether they are equivalent to the input problems. To our knowledge, \texttt{Tarski} is the only system that supports indexed root expressions in the input, which may be contained in \texttt{CAlC}'s solution formulas.

\section{Conclusion and Future Work}
\label{sec:conclusion}

\subsection{Future Work}

\paragraph{Preprocessing}

The preprocessing techniques implemented in \texttt{SMT-RAT} are designed for quantifier-free formulas. Incorporating incomplete techniques such as \emph{virtual substitution} for quantified formulas might help to improve the performance on some benchmarks.

\paragraph{Dynamic Variable Orderings}

Variable orderings have a crucial impact on the performance of CAD-based algorithms. The CAlC algorithm allows for different variable orderings for every branch and the algorithm can naturally combine sub-results. This is technically possible in MCSAT as well -- however, combining sub-results computed with different variable orderings comes with high costs when combining projection results that stem from different variable orderings.
The advantages of CAlC could be further facilitated by employing the techniques described in \Cref{sec:splitting} to split the input into more branches, where, again, a different variable ordering is possible in each branch.

\paragraph{Implicant Calculation}

The experimental evaluation shows that the computation of implicants plays a central role in the algorithm. While the choice based on algebraic criteria (e.g. degrees of polynomials) is important, (efficient) Boolean reasoning is crucial. Our implementation features only a basic implementation for Boolean reasoning which exhaustively computed all possible conflicts in order to choose the ``best'' one according to algebraic criteria. Future implementations should focus on efficient Boolean reasoning, incorporating techniques from SAT solvers -- or even using a SAT solver -- guided by algebraic criteria in order to compute a single implicant which is ``good''. 

\paragraph{Reduce Lifting over Sections}

The work in~\cite{bar2023} extends the CAlC algorithm for closed cells, i.e., it tracks whether a truth-invariant cell maintains the same truth value on its closure, based on strict relation symbols in the input formula. This allows to build coverings with closed intervals in the CAlC algorithm. Thereby, we could avoid exploring the branches on some cell boundaries which oftentimes involve computations with non-rational real algebraic numbers that are particularly computationally expensive.
This technique could be extended to CAlC for quantifiers in a straight-forward way.

\paragraph{Minimizing Solution Formulas for Quantifier Elimination}

The CAlC algorithm has a competitive running time for quantifier elimination, however, in particular \texttt{QEPCAD~B} computes significantly smaller solution formulas for the parameters. With some effort, the techniques employed by \texttt{QEPCAD~B} to minimize the solution formula could also be applied for results computed by CAlC.

\paragraph{Optimization Problems}

\emph{Optimization Modulo Theories}~\cite{bigarella2021} deals with the optimization variant of SMT, where we are not only interested in \emph{some} solution that satisfies the input formula, but a solution where the value of a specified \emph{objective variable} is minimal or maximal. A naive approach would be to transform the problem to a quantifier elimination problem where the objective variable is the only parameter. We then compute its solution space and pick the minimal or maximal value. Less naively, we could compute only the part of its solution space that is sufficient to prove that a certain value is the minimal or maximal value.

\subsection{Conclusion}

We generalized the successful CAlC algorithm to quantified input formulas and quantifier elimination problems. Our algorithm works directly on formulas with arbitrary Boolean and quantifier structure by shifting Boolean reasoning to computation of implicants that explain a conflict.
This avoids the need for a complex CDCL(T) architecture and keeps the implementation relatively simple and compact.

Further, we discussed various extensions of our basic algorithm in order to improve the running times and to reduce the solution formula size for quantifier elimination. We presented an embedding into a proof system for cylindrical algebraic decomposition which allows generating certificates to verify the algorithm's results in future implementations.

Our algorithm uses an adapted concept of implicants to incorporate Boolean reasoning. Their computation plays a central role for the performance of the algorithm. For our implementation, we focused on generating optimal implicants with respect to algebraic criteria, mostly neglecting the efficiency of Boolean reasoning. We investigated different ways for computing implicants, varying in the exhaustiveness of Boolean reasoning and the algebraic criteria. Although it was shown experimentally that algebraic criteria have an impact, the computational effort spent on Boolean reasoning limits the scalability of the current implementation. 

Still, our algorithm shows decent performance compared to other tools: On SMT-LIB's QF\_NRA benchmarks, it is comparable with our MCSAT implementation, while state-of-the-art SMT solvers outcompete both implementations. On SMT-LIB's NRA benchmarks, our algorithm solves the most instances, outcompeting the state-of-the-art SMT solvers. Compared to quantifier elimination tools, our algorithm is competitive with respect to running times. However, regarding the output formula size, the state-of-the-art tool \texttt{QEPCAD~B} produces smaller formulas. 

Particularly the latter comparison shows that we successfully transferred ideas from SMT solving to quantifier elimination. The results are promising and motivate future work on the algorithm. Throughout the paper, we presented various possibilities for further improving its performance.

\section*{Acknowledgements}

Jasper Nalbach was supported by the Deutsche Forschungsgemeinschaft (DFG, German Research Foundation) as part of RTG 2236 \emph{UnRAVeL} and AB 461/9-1 \emph{SMT-ART}.

Thanks to Erika Ábrahám for discussions on the CAlC method, and to Christopher W. Brown for the discussion on indexed root expressions and Thom's lemma, as well as extending Tarski to support indexed root epxression in the input formulas.
We thank Philip Kroll for initiating the work on \Cref{sec:splitting} and implementing parts of this work in his thesis.

\printbibliography

\end{document}